\def\BibTeX{{\rm B\kern-.05em{\sc i\kern-.025em b}\kern-.08em
		T\kern-.1667em\lower.7ex\hbox{E}\kern-.125emX}}
\newcommand{\subparagraph}{}
\newtheorem{definition}{Definition}
\newtheorem{lemma}{Lemma}
\newtheorem{theorem}{Theorem}
\newcommand{\lss}{LSS\xspace}
\newcommand{\lsslf}{LSS-LF\xspace}
\newcommand{\lsslfFULL}{Learned Space Saving without Low Frequencies}
\newcommand{\lsslfs}{LSS-LFS\xspace}
\newcommand{\lsslfsFULL}{Learned Space Saving Low Frequency Singles}
\newcommand{\lssplus}{LSS+\xspace}
\newcommand{\lsscbf}{LSS-CBF\xspace}
\newcommand{\lssasn}{LSS-HH}
\newcommand{\lssaeFULL}{Learned Space Saving with Heavy Hitters}
\newif\ifcomm
	\newcommand{\mycomm}[3]{{\footnotesize{{\color{#2} \textbf{[#1: #3]}}}}}
	\newcommand{\CRdel}[1]{\textcolor{red}{\sout{#1}}}
    \newcommand{\mycomm}[3]{}
    \newcommand{\CRdel}[1]{}
\newcommand{\BoxedState}[1]{
    \State \fbox{\parbox[t]{\dimexpr\linewidth-\algorithmicindent-2em}{#1}}%
}
\newcommand{\matrixCellWidth}{5.8cm}
\newcommand{\matrixCellWidth}{5.5cm}
\newcommand{\smatrixCellWidth}{3cm}
\newcommand{\mmatrixCellWidth}{3.2cm}
\newcommand{\ssmatrixCellWidth}{2.5cm}
\newcommand{\legendCellWidth}{8cm}
\newcommand{\eps}{\epsilon}
\newcommand{\epsNError}[1][N]{\ensuremath{#1 \epsilon}}
\newcommand{\oneOverE}{ \ensuremath{\eps^{-1}} }
\renewcommand{\gamma}{\phi}
\begin{document}

\date{}


\title{Learning-Based Heavy Hitters and Flow Frequency Estimation in Streams}

\author[1]{Rana Shahout}

\author[1]{Michael Mitzenmacher}

\affil[1]{Harvard University, USA}





\maketitle

\begin{abstract}
Identifying heavy hitters and estimating the frequencies of flows are fundamental tasks in various network domains. Existing approaches to this challenge can broadly be categorized into two groups, hashing-based and competing-counter-based.
The Count-Min sketch is a standard example of a hashing-based algorithm, and the Space Saving algorithm is an example of a competing-counter algorithm.  
Recent works have explored the use of machine learning to enhance algorithms for frequency estimation problems, under the algorithms with prediction framework. However, these works have focused solely on the hashing-based approach, which may not be best for identifying heavy hitters.

In this paper, we present the first learned competing-counter-based algorithm, called \lss{}, for identifying heavy hitters, top $k$, and flow frequency estimation that utilizes the well-known Space Saving algorithm. We provide theoretical insights into how and to what extent our approach can improve upon Space Saving, backed by experimental results on both synthetic and real-world datasets. Our evaluation demonstrates that \lss{} can enhance the accuracy and efficiency of Space Saving in identifying heavy hitters, top $k$, and estimating flow frequencies.
\end{abstract}

\section{Introduction}
\label{sec:intro}

\begin{figure}[t]
  \centering
  \includegraphics[width=0.5\linewidth]{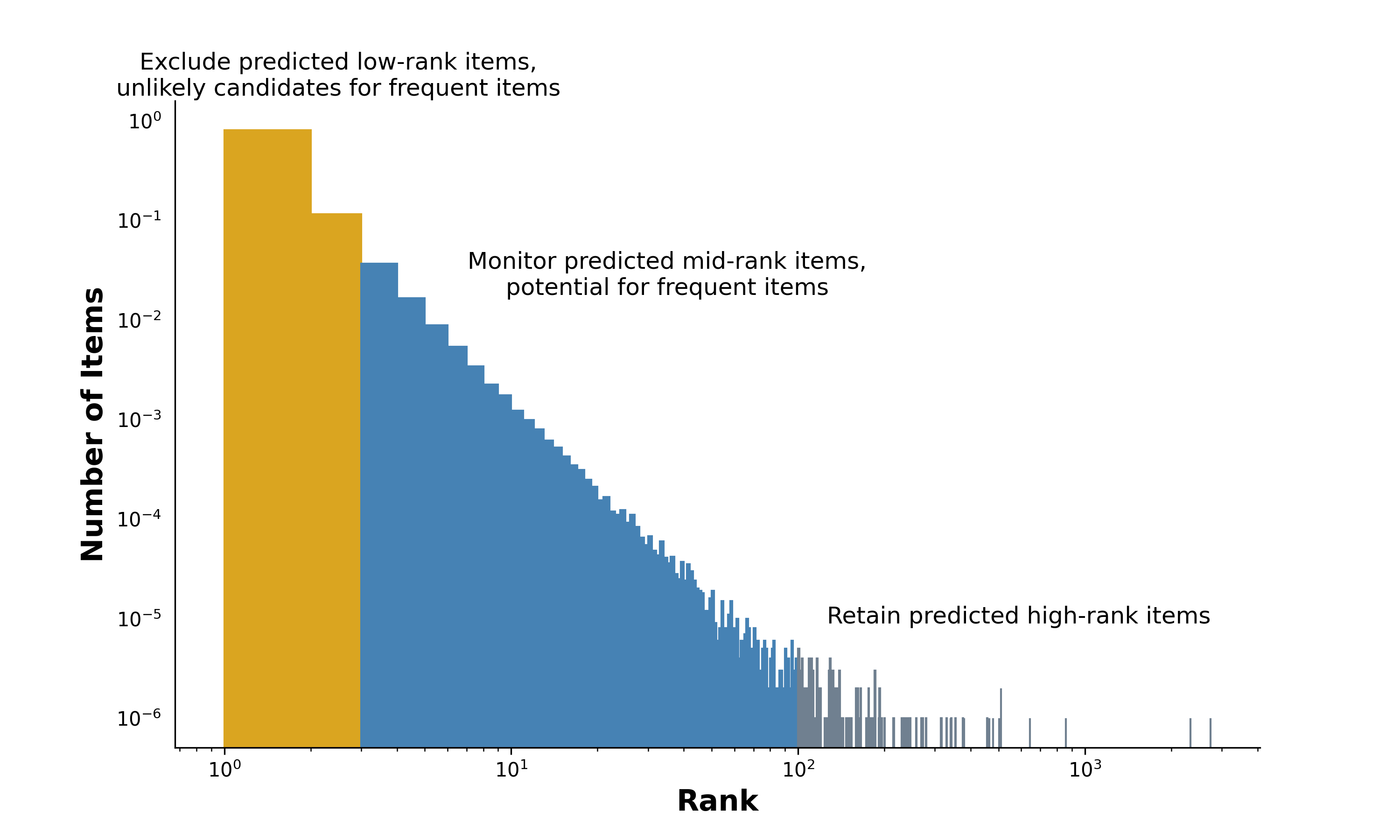}
  \caption{In many practical distributions (such as Zipfian, shown here using logarithmic scales on both axes), there are many low-frequency items.}
  \label{fig:zipf_distribution}
\end{figure}

The paradigm of learning-augmented algorithms, also known as algorithms with predictions, combines machine learning and traditional algorithms. This approach aims to enhance algorithms and data structures by leveraging predictions from machine learning models. Such learning-augmented algorithms have demonstrated theoretical and empirical benefits across numerous areas, including scheduling~\cite{DBLP:conf/soda/AzarLT22,DBLP:conf/stoc/AzarLT21,DBLP:conf/acda/Mitzenmacher21,DBLP:conf/innovations/ScullyGM22}, caching~\cite{lykouris2021competitive,DBLP:conf/soda/Rohatgi20}, and approximate frequency estimation~\cite{hsu2019learning}. In the realm of approximate frequency estimation, the goal is to utilize predictive models to improve the accuracy of the estimation.

A stream of packets flowing through a network or a specific switch can be divided into multiple flows, each identified by a unique set of attributes. For example, the 5-tuple consists of source IP, source port, destination IP, destination port, and protocol is often used as a flow identifier. Two fundamental problems in network statistics are determining the frequency of a flow, i.e., counting the number of packets with a given flow identifier that passes through the network or switch, and identifying the heavy hitter flows (the most frequent ones), as these flows often have the most significant impact. These monitoring capabilities are desirable for various network algorithms and domains, such as load balancing~\cite{dittmann2002network}, routing, fairness~\cite{kabbani2010af}, intrusion detection~\cite{mukherjee1994network}, ~\cite{garcia2009anomaly}, caching~\cite{einziger2017tinylfu}, and policy enforcement~\cite{sommers2007accurate}.

In a modern network, as the number of flows can be massive, providing precise answers to queries regarding flow\footnote{The terms flow and item are used interchangeably.} frequencies and identifying heavy hitters\footnote{The terms heavy hitters and frequent items are used interchangeably.} is often prohibitively costly. To that end, streaming algorithms process the data in a single pass while efficiently estimating flow frequencies with minimal storage space are essential. These algorithms build a {\em sketch}; a compact data structure that extracts statistical information in one pass on the entire data stream.

Standard frequency estimation algorithms can be broadly classified into two categories: hashing-based and competing-counter-based approaches. The hashing-based approach~\cite{cormode2005improved, charikar2002finding, jin2003dynamically} hashes data items into buckets, potentially leading to collisions among different items. 
It then calculates item frequencies based on the number of items hashed into each bucket.
In contrast, the competing-counter-based approach~\cite{metwally2005efficient, misra1982finding} maintains a fixed number of counters that at any time tracks a subset of the input items, without hashing and corresponding collisions. These data stream items ``compete'' for the limited counters, with the algorithm aiming to allocate counters to the most frequent items.
For identifying heavy hitters and top $k$, while hashing-based approaches require additional structures like heaps, competing-counter-based approaches do not necessitate such auxiliary components, making them more space-efficient.
Indeed, competing-counter-based algorithms have been shown to be more space-efficient than hashing-based algorithms, both empirically and asymptotically~\cite{cormode2008finding, cormode2010methods}, as hashing-based algorithms require allocating significantly more counters and space than theoretical lower bounds to mitigate the impact of hash collisions on counter accuracy. Additionally, competing-counter-based algorithms offer deterministic error guarantees, in contrast to hashing-based, which are randomized.

Hsu et al.~\cite{hsu2019learning} introduced a learned hashing-based algorithm for frequency estimation that utilizes a heavy hitter predictor. When an incoming item is predicted to be a heavy hitter, it is assigned a unique bucket for accurate counting. Items not predicted as heavy hitters are processed using a conventional sketching algorithm. This approach reduces collisions between heavy hitters and less frequent items, leading to improved overall accuracy. It is important to note that heavy hitter predictors may exhibit errors, and thus relying solely on them to identify heavy hitters can violate error guarantees.
To our knowledge, there is a lack of research on learned competing-counter-based algorithms. Unfortunately, directly applying the same strategy to competing-counter-based algorithms by simply assigning all counters to predicted heavy hitters leads to unbounded error estimation. For instance, a falsely predicted heavy hitter item occupies a counter that could have been used for a real heavy hitter.  (Recall that
competing-counter-based approaches have a fixed number of counters.)
Even worse, a true heavy hitter might be mistakenly identified as a non-heavy hitter and consequently ignored. Unlike learned hashing-based algorithms that focus only on heavy-hitter items, we propose an approach that employs a predictor to filter out noise caused by items predicted to have low frequencies as in many network traces~\cite{CAIDACH16} and practical distributions (e.g. Zipfian), many items are with one occurrence. (Figure~\ref{fig:zipf_distribution}).

In this paper, we introduce a learned competing-counter-based algorithm
for identifying heavy hitters, top $k$, and flow frequency estimation. We focus on the Space Saving algorithm, although our approach could be applied to any competing-counter-based algorithm, such as the MG algorithm~\cite{misra1982finding}. We present Learned Space Saving (LSS), a novel technique that leverages machine learning predictions to guide the competition for limited counters among data items. 
Specifically, LSS aims to exclude predicted ``weak'' or low-frequency items from the competition, while ensuring that predicted ``strong'' or heavy-hitter items remain in the competition. In this way, the Space Saving accuracy is improved.

LSS is designed to be resilient against prediction errors. Our LSS method employs a filtering mechanism to exclude predicted low-frequency items, resulting in a ``cleaner'' data structure.
To ensure robustness against incorrectly predicting a high-frequency item as a low-frequency item, we employ a Counting Bloom filter that tracks these items and ensures an item is not consistently ignored just because it is (repeatedly) predicted as low frequency.  
When using the heavy hitter predictor, we divide the counters into fixed and mutable counters. A predicted heavy hitter can be allocated a fixed counter, to achieve an accurate count. But fixed counters are limited, so if there are excessive incorrect heavy hitter predictors, the mutable counters allow for a standard Space Saving algorithm on items after the fixed counters are filled.
We break LSS into two variants, \lsslf{} and \lssasn{}, each utilizes a distinct learning model and is designed to exhibit resilience against prediction errors.

Our contributions are: 
\begin{itemize}
    \item We propose \lss{}, a learning-based approach for identifying heavy hitters, top $k$ and frequency estimation, improving Space Saving's accuracy and designed to be robust against prediction errors (Section~\ref{sec:lss}).  \item We break down \lss{} into \lsslf{} (Section~\ref{sec:lss_lf}) and \lssasn{} (Section~\ref{sec:lss_hh}), showing each component's robustness.  \item We present theorems providing insight into potential gains  \item We propose \lssplus{}, a variation of \lss{} offering higher update throughput by relaxing deterministic to probabilistic approximation guarantees (Section~\ref{sec:lss_plus}).  \item  We implement and evaluate \lss{} and variants (Section~\ref{sec:eval}) on synthetic and real-world datasets (Internet traffic, web search); \lss{} demonstrated up to $18\%$ better top-k precision, $24\%$ higher heavy hitter recall, and $34\%$ lower RMSE for frequency estimation compared to Space Saving under certain configurations.

\end{itemize}

\section{Background}
\label{sec:background}

\subsection{Preliminaries}
Given a \emph{universe} $\mathcal{U}$, a \emph{stream} $\mathcal{S} = u_1, u_2, \ldots \in \mathcal{U}^N$ is a sequence of arrivals from the universe. (We assume the stream is finite here.)
We define the frequency of an item $i$ in $\mathcal{S}$ as the number of
items corresponding to $i$ in $\mathcal{S}$ and denote this quantity by $f_i$.


We seek algorithms that support the following operations:
\begin{itemize}
	\item {\sc \textbf{ADD}$\bm{(i)}$}: given an element $i\in\mathcal U$, append $i$ to $\mathcal S$.
	\item {\sc \textbf{Query}$\bm{(i)}$}: return an estimate $\widehat{f_i}$ of $f_i$
\end{itemize}

A weighted stream consists of tuples of the form $(u_i, w_i)$, where $u_i$ represents the item's id and $w_i$ its (non-negative) weight. At each step, a new tuple is added to the stream. In this setting, the weight $w_i$ is added to the corresponding item's frequency. Frequency estimation algorithms typically assume unweighted updates, such as click streams, while others assume weighted updates, such as network traffic volumes. In this paper, we focus on the unweighted updates model for ease of exposition, although our approach applies to weighted data streams as well, with straightforward modifications.




\begin{definition}
\label{definition:frequency_error_guarantee}
An algorithm solves $\epsilon$-Frequency if given any Query$(i)$ it returns $\widehat{f_i}$ satisfying
$$f_i \le \widehat{f_i} \le f_i  + \epsNError.$$ 
A (randomized) algorithm is said to solve $\epsilon$-Frequency with probability $1-\delta$ if, for any $i$
chosen before the stream is processed, 
Query$(i)$ returns $\widehat{f_i}$ satisfying the above bound with probability $1-\delta$. Similarly, a (randomized) algorithm solves $\epsilon$-Frequency in expectation if given any Query$(i)$ it returns $\widehat{f_i}$ satisfying 
$$f_i \le E[\widehat{f_i}] \le f_i  + \epsNError.$$
\end{definition}

\begin{definition}
\label{definition:hh_error_guarantee}
An algorithm solves $(\epsilon, \theta)$-HeavyHitters if it returns a set of items $B$, such that for every item $i$: if $f_i \ge \theta N$, then $i \in B$, and if $f_i \le (\theta - \epsilon) N$ then $i \notin B$. 
An algorithm solves $(\epsilon, \theta)$-HeavyHitters with probability $1-\delta$ if it returns a set of items $B$ satisfying the above conditions with probability $1-\delta$.
\end{definition}

Deterministic solutions (competing-counter-based)~\cite{metwally2005efficient, misra1982finding} for the $\epsilon$-HeavyHitters problem ensure the identification of all items with sufficiently large counts, but potentially may include some items with counts smaller than the given heavy hitter threshold. 
In contrast, hashing-based~\cite{cormode2005improved, charikar2002finding} for the $\epsilon$-HeavyHitters problem may introduce a probability of failure.

\begin{table}[t]

	\centering
	\small
	\caption{List of Symbols}
	\begin{tabular}{|c|p{6.6cm}|}
		\hline
		Symbol & Meaning \tabularnewline
		\hline
		$\mathcal S$ & The data stream \tabularnewline
		\hline
		$\mathcal U$ & The universe of elements \tabularnewline
		\hline
		$N$ & The stream size \tabularnewline
		\hline
            $u_{[x...y]}$ & The partial stream $u_x, u_{x+1},\cdots,u_y$ \tabularnewline
            \hline
		$\epsilon$ & An estimate accuracy parameter \tabularnewline
		\hline
		$\theta$ & The heavy hitters threshold \tabularnewline
		\hline
		$\tau$ & \lssplus{} robustness probability \tabularnewline
		\hline
		$t$ & \lsslf{} threshold \tabularnewline
		\hline
            $\delta(u_{1\cdots x})$ & SS error vector after processing $u_{1...x}$ \tabularnewline
            \hline
	    $\ell$ & Number of single occurrence items\tabularnewline
            \hline
            $fpr$ & Bloom filter false positive rate \tabularnewline
            \hline
	\end{tabular}
	\label{tbl:notations}
\end{table}

\subsection{Robustness in Learning-Augmented Algorithms}
As mentioned, a learned-augmented algorithm combines traditional algorithms with machine learning models. 
(It should be noted, however, that this approach generally treats the machine learning models as black boxes, allowing it to work with any model that provides usable predictions.)
However, machine learning methods are inherently imperfect and may exhibit errors, including substantial and unexpected errors.
A key question is how can we use predictions while maintaining robustness, which refers to ability of an algorithm to maintain reasonable performance even if the predictions are simply wrong \cite{lykouris2021competitive,mitzenmacher2022algorithms}. Ensuring robustness is essential because machine learning models are rarely perfect in practice. There are several reasons why learned models may exhibit errors. First, most models are trained to perform well on average by minimizing expected losses. In doing so, they may reduce errors on the majority of inputs at the expense of increasing errors on outlier cases. Additionally, generalization error guarantees only hold when the training and test samples are drawn from the same distribution. If this assumption is violated, due to distribution drift or adversarial samples, the predictions can vary from the ground truth. 

One general approach for learned-augmented algorithms to try to achieve robustness is to fall back on the traditional algorithm when the model is inaccurate.  This requires being able to notice inaccurate predictions and change to the traditional algorithm quickly and effectively. Another related approach, which we use here, is to find ways to limit the damage that can be caused by incorrect predictions by using additional algorithm or data structure.   

\subsection{Space Saving}
The Space Saving (SS) algorithm~\cite{metwally2005efficient} is a competing-counter algorithm that provides frequency estimation for data stream items.
Space Saving maintains a set of $k$ entries, denoted by $T$, each entry has an associated item $i$ and counter, and we use $c_i$ to denote the counter value associated with item $i$ if any exists.
When $k=\frac{1}{\epsilon}$, Space Saving estimates the frequency of any item with an additive error less than $N\epsilon$ where $N$ is the stream size.

When an item $i$ is encountered within the stream that is in the set $T$, the algorithm increments its corresponding counter $c_i$.
In cases where the item $i$ is not present in $T$ and the size of $T$ is less than $k$,
the algorithm adds $i$ to the set $T$ and initializes its count to $1$ ($c_i=1$).
Otherwise, when the item $i$ is not in $T$ and the size of $T$ has reached $k$, Space Saving identifies an item $j$ within $T$ with the minimum non-zero count $c_j$, denoted by $minCount$.
The algorithm then executes an update in which $c_i$ is assigned the value of $c_j + 1$, and the
set $T$ is updated to replace item $j$ with item $i$.
To estimate the frequency of an item, if the item is in $T$ then we report its count.
Otherwise, we report the smallest counter stored in $T$.
Pseudocode for SS is presented (in black) in Algorithm~\ref{alg:lss}.

Space Saving satisfies the following properties (the first two
properties are proved in~\cite{metwally2005efficient} while the latter is proved in~\cite{zhao2021spacesaving}):

\begin{lemma}
\label{lemma2}
    Space Saving with $k=\epsilon^{-1}$ counters ensures that after processing $N$ insertions, the minimum count of all monitored items is no more than $\frac{N}{k} = N\epsilon$, i.e, $minCount < N\epsilon$.
\end{lemma}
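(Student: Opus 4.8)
The plan is to track the sum of all counters maintained by Space Saving and relate it to the number of insertions. Let $\mathcal{C} = \sum_{i \in T} c_i$ denote the sum of all $k$ counters at any point during execution. I would first establish the invariant that $\mathcal{C}$ equals exactly the number of insertions processed so far. This follows by induction on the stream: initially $\mathcal{C} = 0$; when an item already in $T$ is incremented, $\mathcal{C}$ increases by $1$; when a new item is inserted into a non-full table, a counter of value $1$ is created, so $\mathcal{C}$ increases by $1$; and when a new item evicts the minimum item $j$, we set $c_i \gets c_j + 1$ and remove $c_j$, for a net change of $+1$. Hence after $N$ insertions, $\mathcal{C} = N$.

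Next I would use a simple averaging (pigeonhole) argument: once the table is full with $k = \epsilon^{-1}$ entries, the minimum counter $minCount$ is at most the average counter value, so $minCount \le \mathcal{C}/k = N/k = N\epsilon$. To get the strict inequality claimed in the statement, I would note that not all counters can simultaneously equal the average unless they are all equal; more carefully, since $minCount$ is an integer and $\mathcal{C} = N$ is distributed over $k$ counters, if $minCount \ge N\epsilon$ then every counter is at least $N\epsilon$, forcing $\mathcal{C} \ge k \cdot N\epsilon = N$, which is consistent — so strictness requires a slightly finer observation. The cleanest route is: if the minimum were $\ge N\epsilon$, then since $N\epsilon = N/k$ need not be an integer, the minimum integer counter is $\ge \lceil N/k \rceil$; but actually the standard statement is just $minCount \le N/k$, and the strict version $minCount < N\epsilon$ holds whenever $k$ does not divide $N$ evenly, or one simply cites it as the established bound from~\cite{metwally2005efficient}. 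I would present the non-strict bound $minCount \le N\epsilon$ as the core argument and remark that the strict form follows identically since at the moment of the $N$-th insertion the sum over $k$ counters is exactly $N$ and they cannot all exceed the average.

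The main obstacle — really the only subtlety — is the boundary case before the table fills up: if fewer than $k$ distinct items have ever appeared, then $|T| < k$ and the "minimum nonzero count" could be as large as... well, it is still bounded since $\mathcal{C} = N$ is spread over $|T| \le k$ counters, but the averaging gives $minCount \le N/|T|$, which is weaker. However, in that regime every item that ever appeared is stored exactly, so the frequency estimation is exact and the bound is only needed for the analysis of items \emph{not} in $T$; one should state the lemma as applying once $|T| = k$, which is the operative case, and this matches the hypothesis $k = \epsilon^{-1}$. I would close by noting this lemma is exactly Theorem~3.1 (or the corresponding claim) of~\cite{metwally2005efficient}, so the argument above is a restatement for completeness.
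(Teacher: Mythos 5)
Your proof is correct and is the standard argument: the paper itself does not prove this lemma but defers to~\cite{metwally2005efficient}, and your sum-of-counters invariant (each of the three insertion cases increases $\sum_{i\in T} c_i$ by exactly $1$, so the sum equals $N$) followed by averaging over the $k$ counters is precisely the argument given there. The only wrinkle you flag --- strict versus non-strict inequality --- is an artifact of the lemma's own phrasing (``no more than $\frac{N}{k}$'' followed by ``$minCount < N\epsilon$''), and the non-strict bound $minCount \le N\epsilon$ that your averaging argument cleanly delivers is the form actually used in the downstream theorems, so no further work is needed.
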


\begin{lemma}
    All items with frequency larger than or equal to $minCount$ are in the set $T$, the set of items with associated counters.
\end{lemma}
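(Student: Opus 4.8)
The plan is to derive this lemma from the standard \emph{over-estimation invariant} of Space Saving, together with the fact that $minCount$ never decreases as the stream is processed. Write $g_i(t)$ for the number of occurrences of item $i$ among the first $t$ arrivals $u_1,\dots,u_t$, write $T(t)$ for the monitored set and $c_i(t)$ for the counter of a monitored item $i$ after those arrivals, and let $minCount(t)$ be the smallest counter in $T(t)$ (with $minCount(t)=0$ while $|T(t)|<k$). I would first prove, by induction on $t$, the conjunction: (i) $minCount(t)\ge minCount(t-1)$; (ii) if $i\in T(t)$ then $c_i(t)\ge g_i(t)$; and (iii) if $i\notin T(t)$ then $minCount(t)\ge g_i(t)$. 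Applying this at $t=N$ settles the lemma: an item $i\notin T=T(N)$ satisfies $f_i=g_i(N)\le minCount(N)=minCount$, so any item absent from $T$ has frequency at most $minCount$; equivalently, every item whose frequency exceeds $minCount$ — in particular every genuine heavy hitter — is monitored. (The strict version $f_i>minCount\Rightarrow i\in T$ is exactly what the over-estimation invariant yields; the boundary $f_i=minCount$ is immaterial downstream, since even an unmonitored such item is reported with the exact estimate $minCount$.)

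For the induction, the base case $t=0$ is vacuous. For the step, consider processing $u_t$ and split into the three cases of the algorithm. If $u_t=i\in T(t-1)$, only $c_i$ changes, increasing by one, so $c_i(t)=c_i(t-1)+1\ge g_i(t-1)+1=g_i(t)$ gives (ii) for $i$; every other counter and every other $g_j$ is unchanged and a counter only went up, so $minCount$ did not drop and (i), (iii) follow. If $u_t=i\notin T(t-1)$ with $|T(t-1)|<k$, then no eviction has ever occurred (evictions require a full table, and $|T|$ is non-decreasing until it first reaches $k$), so $i$ cannot have appeared earlier — it would still be in $T$ — hence $c_i(t)=1=g_i(t)$, nothing else changes, and (i), (iii) are immediate (all other non-monitored items are likewise unseen).

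The remaining case — $u_t=i\notin T(t-1)$ with $|T(t-1)|=k$, where the minimum-count item $j$ is evicted and replaced by $i$ with $c_i(t)=minCount(t-1)+1$ — is the crux, and the main obstacle is maintaining all three invariants at once through this step. The observations are: the $k-1$ surviving old entries all had counter at least $minCount(t-1)$ (since $c_j(t-1)=minCount(t-1)$ was the minimum) and the new entry has counter $minCount(t-1)+1$, so $minCount(t)\ge minCount(t-1)$, giving (i); for the inserted item $i$, using $i\notin T(t-1)$ and the inductive (iii), $c_i(t)=minCount(t-1)+1\ge g_i(t-1)+1=g_i(t)$, giving (ii) for $i$; for the evicted item $j$, now outside $T(t)$, the inductive (ii) plus the (i) just shown give $minCount(t)\ge minCount(t-1)=c_j(t-1)\ge g_j(t-1)=g_j(t)$, which is (iii) for $j$; and every other item has its counter and $g$-value unchanged while $minCount$ did not fall. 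This closes the induction. The only genuinely delicate point — and the reason the argument is phrased as a single simultaneous induction over all items with an \emph{inequality} rather than an equality — is that the bound for the evicted item must be rescued through the monotonicity of $minCount$ at the very step that removes it from $T$; no nontrivial computation is involved beyond this bookkeeping.
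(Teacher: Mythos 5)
Your proof is correct. The paper does not prove this lemma itself --- it cites the original Space Saving paper of Metwally et al.\ --- and your simultaneous induction on the three invariants (monotonicity of $minCount$, over-estimation $c_i \ge f_i$ for monitored items, and $minCount \ge f_i$ for unmonitored items) is precisely the standard argument from that source, including the correct handling of the eviction step via the just-established monotonicity. Your remark about the boundary case is also well taken: the argument genuinely yields only the strict version $f_i > minCount \Rightarrow i \in T$, and the lemma as stated with ``$\ge$'' is in fact false in general (e.g.\ with $k=2$ and stream $a,b,c$, item $a$ ends with $f_a = 1 = minCount$ but is evicted), so the statement should read ``larger than'' as in the original reference; as you note, this does not affect anything downstream.
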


\begin{lemma}
\label{lemma4}
    Space Saving with $\epsilon^{-1}$ counters can estimate the frequency of any item with an additive error less than $N\epsilon$.
\end{lemma}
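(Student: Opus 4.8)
The plan is to handle the two possible outcomes of a query separately, combining the two lemmas already established with one additional invariant that controls how much any counter can over-estimate.

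First I would record the (easy) fact that Space Saving never under-counts: for every item $i$ that currently has a counter, $c_i \ge f_i$, and for an item with no counter, trivially $0 \le f_i$. This follows by induction on the stream operations, since whenever $c_i$ is incremented the true frequency $f_i$ also increases by one, and whenever $i$ is (re)inserted it is given a value at least as large as the number of occurrences of $i$ seen so far. Consequently, the reported estimate $\widehat{f_i}$ is always a valid lower bound, $f_i \le \widehat{f_i}$, in both cases, and the remaining work is the upper bound $\widehat{f_i} \le f_i + N\epsilon$.

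The key step is the invariant that at all times, for every monitored item $i$, $c_i \le f_i + minCount$, where $minCount$ is the current minimum counter value. I would prove this by induction over ADD operations, using two auxiliary observations: (i) while $T$ is not yet full, no replacement has occurred, so $c_i = f_i$ for every monitored $i$ and the invariant holds trivially; and (ii) once $T$ is full it stays full (items are replaced, never deleted), and from that point $minCount$ is non-decreasing, since a replacement removes an item of value $minCount$ and inserts one of value $minCount+1$, while an increment can only raise the minimum. For the inductive step once $T$ is full: if $i$ is already monitored and is incremented, both $c_i$ and $f_i$ grow by one so the slack $c_i - f_i$ is unchanged; if $i$ is inserted by evicting the minimum item $j$, then $c_i = c_j + 1 = minCount_{\mathrm{old}}+1$ while $f_i \ge 1$, so $c_i - f_i \le minCount_{\mathrm{old}}$; and for every other monitored item the slack is unchanged. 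In all cases the slack is at most $minCount_{\mathrm{old}} \le minCount_{\mathrm{new}}$, establishing the invariant. Combining it with Lemma~\ref{lemma2} ($minCount < N\epsilon$) gives, for any monitored $i$, $\widehat{f_i} = c_i \le f_i + minCount < f_i + N\epsilon$.

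It remains to treat $i \notin T$, where the estimate returned is the smallest stored counter, namely $minCount$. By the second of the two lemmas above, any item with frequency at least $minCount$ is monitored, so by contraposition $f_i < minCount$; hence $f_i \le \widehat{f_i} = minCount < N\epsilon \le f_i + N\epsilon$. Together with the monitored case this yields $f_i \le \widehat{f_i} < f_i + N\epsilon$ for every item, i.e.\ the additive error is strictly less than $N\epsilon$, as claimed. The main obstacle I anticipate is making the per-counter over-estimation invariant fully rigorous — in particular, correctly tracking the slack $c_i - f_i$ through every case of the update rule and confirming the monotonicity of $minCount$, including the corner case where $T$ is not yet full; everything else is bookkeeping built on the two prior lemmas.
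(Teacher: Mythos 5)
Your proof is correct: the lower bound via the no-undercount property, the invariant $c_i \le f_i + minCount$ proved by induction (using the monotonicity of $minCount$ once $T$ is full), and the contrapositive of the second lemma for unmonitored items together give $f_i \le \widehat{f_i} < f_i + N\epsilon$, which is exactly the classical Metwally et al.\ argument. The paper itself does not prove this lemma --- it only cites the literature --- so there is no internal proof to compare against, and your reconstruction matches the standard one in the cited sources.
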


\section{Learned Space Saving}
\label{sec:lss}

\begin{figure}[t]
    \centering
    \includegraphics[width=0.5\linewidth]{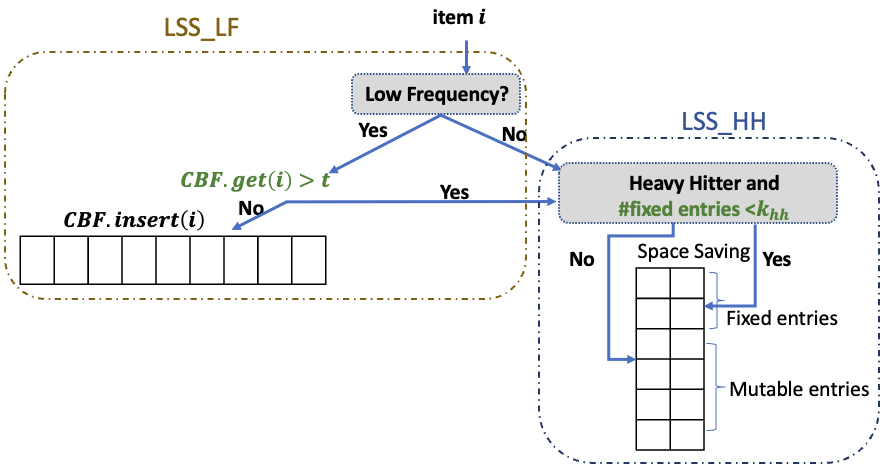}
    \caption{Overview of \lss{}, combining the approaches of \lsslf{} and \lssasn{}. \lsslf{} utilizes a low-frequency predictor to exclude infrequent items. \lssasn{} divides entries into fixed and mutable entries, using a heavy hitters predictor to allocate fixed entries for frequent items while processing remaining items through the mutable entries. To mitigate the impact of prediction errors, \lsslf{} utilizes a Counting Bloom Filter (CBF), while \lssasn{} sets a limit ($k_{hh}$) on the number of fixed entries (represented in green color).}
    \label{fig:LSS_Overview}
\end{figure}

\begin{algorithm}[t]
\caption{\lss{}} \label{alg:lss}
\begin{algorithmic}[1]
\State \textbf{Initialization}: $T \leftarrow \varnothing$,
\State \textcolor{brown}{$CBF \textit{-- Counting Bloom filter}$, $t \textit{-- CBF threshold}$}
\State \textcolor{brown}{$\textit{freqPredictor}$ -- frequency predictor}
\State \textcolor{blue}{$fixedEntries \leftarrow 0$}
\State \textcolor{blue}{HH - heavy hitters predictor, $k_{hh}$ - number of allowed fixed entries}

    \Function{Insert}{$i$}
    \textcolor{brown}{\If{\textcolor{brown}{$\textit{freqPredictor(i)} > t$ or} $CBF.GET(i) > t$} \label{lsscbf:line_insertSS}
        \textcolor{black}{
        \If{$i \in T$}
            \State $c_i \leftarrow c_i + 1$
        \ElsIf{$|T| < k$}
            \State $T \leftarrow T \cup \{i\}$
            \State $c_i \leftarrow 1$
            \textcolor{blue}{\If{HH(i) and $fixedEntries \leq  k_{hh}$}
            \State mark entry of $c_i$ as fixed entry
            \State $fixedEntries \leftarrow fixedEntries + 1$
            \EndIf
            }
        \Else
            \State $j \leftarrow \text{argmin}_{j \in \textcolor{blue}{T_{unasn}}} c_j$ \label{line:lssasn_minarg}
            \State $c_i \leftarrow c_j + 1$ \label{line:lssasn_replacing}
            \State {$T \leftarrow T \cup \{i\} \setminus \{j\}$}
            \textcolor{blue}{\If{HH(i) and $fixedEntries \leq  k_{hh}$}
            \State mark entry of $c_i$ as fixed entry
            \State $fixedEntries \leftarrow fixedEntries + 1$
            \EndIf}
        \EndIf
        }   
    \Else
    \State $CBF.ADD(i)$ \label{lsscbf:line_insertcbf}
    \EndIf}
    \EndFunction

    \Function{Query}{$i$}
        \If{$i \in T$}
            \State \Return $c_i$ \textcolor{brown}{$+ t$} \label{lsscbf:line_returncnt}
        \Else
            \State \Return $\min_{j \in T} c_j$  \textcolor{brown}{$+ t$}\label{lsscbf:line_return1}
        \EndIf
    \EndFunction
\end{algorithmic}
\end{algorithm}

We propose a learning-based approach to enhance the accuracy of the Space Saving algorithm, called Learned Space Saving (LSS). 
Our approach allows for two types of predictors: one for identifying low-frequency items and another for identifying heavy hitters. Either or both can be used. Alternatively, a single stronger predictor capable of predicting item frequencies could be employed. To accommodate the availability of predictors in different systems, we divide \lss{} into two components: \lss{} with low frequency predictor (\lsslf{}) when only the low-frequency predictor is available, and \lss{} with heavy hitter predictor 
(\lssasn{}) when only the heavy hitters predictor is available. When both predictors are available, or when a single predictor can predict item frequencies, we utilize the full LSS, which combines the strengths of the individual components.

\lsslf{} employs a filtering mechanism to exclude low-frequency items, resulting in a ``cleaner'' data structure. It utilizes a Counting Bloom Filter to ensure robustness against prediction errors. 
\lssasn{} takes a different approach by dividing the Space Saving counters into two categories: fixed counters and mutable counters. It uses a heavy hitters predictor to allocate fixed counters for items identified as heavy hitters. The remaining items are processed using the traditional Space Saving algorithm, using mutable counters for maintaining their counts. This overall approach is illustrated in Figure~\ref{fig:LSS_Overview}.

Our design decouples the robustness logic (highlighted in green in Figure~\ref{fig:LSS_Overview}) from the remaining algorithm logic. Figure~\ref{fig:example} illustrates the algorithms presented in this paper. We have employed a color-coding scheme to aid readers in visualizing and distinguishing between the presented algorithms. We represent four algorithms in the pseudo-code of Algorithm~\ref{alg:lss}, each uniquely represented using specific colors, as follows:
Space Saving is represented in black.
\lsslf{} combines black and \textcolor{brown}{brown} colors.
\lssasn{} combines black and \textcolor{blue}{blue} colors.
Finally \lss{} combines all of them.

\section{\lsslf{}: \lsslfFULL{}}
\label{sec:lss_lf}

We claim that by selectively removing low-frequency items and properly adjusting their estimation, we can enhance the accuracy of the Space Saving algorithm.
We first consider the case of items appearing only once in a stream and ``remembering'' their count as 1 as a special case.
Then, we present the general case of removing low-frequency items up to $t$ occurrences.

Our intuition is that occurrences of items that appear once ``disturb'' the accuracy of items that are in $T$. Space Saving could set a 
counter of an item that is potentially frequent (i.e. heavy hitter) to zero by encountering a single occurrence item.
Eventually, the replaced frequent item will return to the Space Saving table (from correctness), but in this case, the ``counter history'' will be lost, causing inaccuracy in the counters.

\ifdefined\arXiv
\begin{proof}
We prove this using induction on $t$.
\textbf{Base case at $t=0$:} As $u_x = i$ 
which is a single occurrence item, $c_i(u_{1\cdots (x-1)}) = 0$.
At the time of arrival of $u_x$, the algorithm of Space Saving selects the counter $c_j$ with the smallest item $j$ (Line~\ref{lss:line_ss_minarg}), the counter of item $i$ will be the counter of $j+1$, and then the counter $j$ is reset (Line~\ref{line:ss_replacing}).
In other words, for $ M \ge 1$, let

\small
\begin{align}
j= \text{argmin}_{j} c_j(u_{1\cdots (x-1)}) \\
c_j(u_{1\cdots (x-1)}) = M \\
f_i(u_{1\cdots (x-1)})  = 0 \\
c_i(u_{1\cdots (x-1)}) = 0
\end{align}
\normalsize

Then after the arrival of $u_x$:

\small
\begin{align}
f_j(u_{1\cdots x}) = f_j(u_{1\cdots (x-1)}) \\
f_i(u_{1\cdots x}) =  1\\
c_j(u_{1\cdots x}) = 0 \label{eq:cj_zero} \\
c_i(u_{1\cdots x}) = M + 1
\end{align}
\normalsize

We now calculate the error of the frequency estimation of items $j$ and $i$. For item $j$, since $M \ge 1$,  Equation~\ref{eq:cj_zero} implies 

\begin{multline*}    
\delta_j(u_{1\cdots (x-1)}) = |f_j(u_{1\cdots (x-1)}) - c_j(u_{1\cdots (x-1)})|  = |f_j(u_{1\cdots (x-1)}) - M| \\ <  |f_j(u_{1\cdots (x-1)})| = |f_j(u_{1\cdots x})|  = |f_j(u_{1\cdots x}) - c_j(u_{1\cdots x})| = \delta_j(u_{1\cdots x})
\end{multline*}

For item $i$:
\begin{multline}    
\delta_i(u_{1\cdots (x-1)}) = |f_i(u_{1\cdots (x-1)}) - c_i(u_{1\cdots (x-1)})| = 0 < |1 - (M+1)| \\ = |f_i(u_{1\cdots x}) - c_i(u_{1\cdots x})| = \delta_j(u_{1\cdots x})
\end{multline}

For all $k \neq i$ and $k \neq j$, since the counter of $k$ is not affected by the arrival of $u_x$
$\delta_k(u_{1\cdots (x-1)}) = \delta_k(u_{1\cdots x})$. Thus,
$\delta(u_{1\cdots (x-1)}) < \delta(u_{1\cdots x})$.

\textbf{Induction step for $t > 0$.}
We start with the induction hypothesis
$\delta(u_{1\cdots (x-1)}v_{1 \cdots (t-1)}) < \delta(u_{1\cdots x}v_{1 \cdots (t-1)})$, and consider the arrival $v_t$.
\paragraph{Case 1: $c_{v_t}(u_{1\cdots x}v_{1 \cdots (t-1)}) > 0$ :} That means Space Saving keeps track of item $v_t$ frequency in counter $c_{v_t}$ over the stream $u_{1\cdots x}v_{1 \cdots (t-1)}$, which implies that  $c_{v_t}(u_{1\cdots (x-1)}v_{1 \cdots (t-1)}) > 0$ because eliminating $u_x$ does not affect the counter of $v_j$.
In both cases, Space Saving updates this counter, which leaves the error unchanged.

\paragraph{Case 2: $c_{v_t}(u_{1\cdots x}v_{1 \cdots (t-1)}) = 0$:} If $v_t$ is a single occurrence item, the analysis follows the base case.

If $v_t$ is not a single occurrence item, if $c_{v_t}(u_{1\cdots (x-1)}v_{1 \cdots (t-1)}) > 0$, it means that the arrival of the single occurrence $u_x$ reset the counter of $v_t$.
Under the stream $u_{1\cdots (x-1)}v_{1 \cdots (t-1)}$, the Space Saving continues to update the counter of $v_t$ without adding any error.
A combination of this and the induction hypothesis yields the desired result.

If $v_t$ is not a single occurrence item,
if $c_{v_t}(u_{1\cdots (x-1)}v_{1 \cdots (t-1)}) = 0$, by the induction hypothesis, both counter vectors have the same set of non-zero entries.
If the Space Saving has fewer than $k$ entries (Line~\ref{line:ss_addentry}), then the algorithms add a counter for $v_t$ and the analysis follows Case 1 above.
Otherwise, Space Saving will pick the counter $c_j$ to replace.
Let
\small
\begin{align}
j= \text{argmin}_{j} c_j(u_{1\cdots x}v_{1\cdots y-1}) \\
j'= \text{argmin}_{j'} c_j(u_{1\cdots (x-1)}v_{1\cdots y-1})
\end{align}
\normalsize
Either $j'=j$ or $j=i$.
In case of $j'=j$: streams ($u_{1\cdots x}v_{1\cdots y}$ and $u_{1\cdots (x-1)}v_{1\cdots y}$) result in updating the counters by the same differences. Combining that with the induction hypothesis yields the result.

In the case of $j=i$: after the arrival of $v_t$:

\small
\begin{align}
c_i(u_{1\cdots x}v_{1\cdots y}) = c_{j'}(u_{1\cdots (x-1)}v_{1\cdots y}) = 0 \\
c_i(u_{1\cdots x}v_{1\cdots (t-1)}) \ge c_{j'}(u_{1\cdots (x-1)}v_{1\cdots (t-1)}) \label{eq:increment_i}
\end{align}
\normalsize
The equation \eqref{eq:increment_i} is based on the fact that each item ($i$ in particular) increases a counter in Space Saving.

The error of the frequency estimation of items $v_t$:

\begin{multline*}    
\delta_{v_t}(u_{1\cdots x}v_{1\cdots y}) = |f_{v_t}(u_{1\cdots x}v_{1\cdots y}) - (c_{i}(u_{1\cdots x}v_{1\cdots y}) +1)| \leq \\  |f_{v_t}(u_{1\cdots x}v_{1\cdots y}) - (c_{j'}(u_{1\cdots (x-1)}v_{1\cdots y}) + 1)| = \delta_{v_t}(u_{1\cdots (x-1)}v_{1\cdots y})
\end{multline*}
\end{proof}

\else
\fi

%
We accordingly introduce \lsslf{} (\lsslfFULL{}), which aims to exclude items that are predicted to have up to $t$ occurrences from being inserted into Space Saving. We refer to the special case when $t=1$ by \lsslfs{} (\lsslfsFULL{}).

\subsection{Addressing Items with Single Occurrence}

\ifdefined\arXiv
\begin{figure}
    \centering
    \includegraphics[width = 8cm]{figs/draws-2.pdf}
    \caption{Block-diagram representation of \lsslf{} and \lsslf{}}
    \label{fig:block_diagram}
\end{figure}
\fi

\begin{figure}[t]
  \centering
  \includegraphics[width=0.5\linewidth]{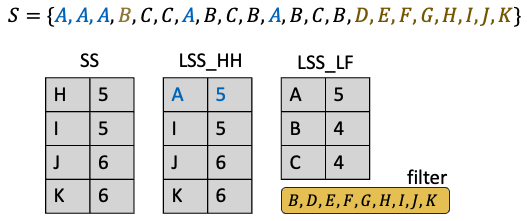}
  \caption{An illustration of our algorithms. Consider an input stream $S$ of 11 items as shown, where $A$ is predicted as a heavy hitter, the first arrival of $B$ is predicted (incorrectly) as low frequency, $D, E, F, G, H, I, J, K$ are predicted (correctly) as low-frequency items. In this example, low-frequency items take over the SS counters. In \lssasn{}, by allocating a fixed entry for the predicted heavy hitter A, the remaining counters again are taken over by low-frequency items. \lsslf{}, however, ensures that low-frequency items do not dominate by filtering them. When $B$ arrives after being previously stored in the filter, it is tracked again. Note that \lsslf{} uses fewer counters than SS and \lssasn{} to account for the additional memory required for the filter. }
  \label{fig:example}
\end{figure}

We focus on this particular case for two primary reasons: First, for some distributions (see Figure \ref{fig:dataset_prediction_hits}), the majority of low-frequency items are items that occur only once. Second, in certain setups, obtaining a predictor for single occurrence items is more achievable than a general low-frequency predictor. Even by addressing this special case of single occurrences, we demonstrate improved performance.

\lsslfs{} employs a predictor that, for a specific item $i$, predicts if item $i$ has a single occurrence. Note that we perform predictions for every incoming arrival.

Unless some mitigating structure is added, ignoring predicted single items can lead to unbounded errors.
For example, if a heavy hitter is predicted incorrectly as a single item, this item is excluded from the Space Saving, violating the error guarantee (Lemma~\ref{lemma4}).

We therefore add a structure to ensure that when the predictor suggests an item be ignored, a verification process is used to determine if the item has been previously ignored, and ignores the prediction if that has occurred.  Our approach uses a Bloom Filter (BF)~\cite{bloom1970space, broder2004network}\footnote{The BF could be replaced with any similar filter structure, such as a cuckoo filter or ribbon filter~\cite{dillinger2021ribbon,fan2014cuckoo}.} (we consider further generalizations later).
A BF is a simple and efficient probabilistic data structure that determines whether an item is part of a set, which does not yield false negatives but can yield false positives.

Specifically, we keep a Bloom filter of predicted single occurrence items previously ignored to ensure robustness.
If the predictor predicts an item is a single occurrence, but the Bloom filter returns that it has been previously ignored, this indicates the item is not a single occurrence item, we disregard the predictor's suggestion and instead insert the item into the Space Saving. Otherwise, if the item is not found in the Bloom filter, we add it to the filter and ignore it.
For example, in Figure~\ref{fig:example}, the items $B, D, E, F, G, H, I, J, K$ are predicted as low-frequency items and are eliminated from being inserted into the table. Instead, they are placed in the filter. $B$ is tracked again as it was incorrectly predicted as a low frequency item. As a result, the counters remain dedicated to tracking non-low-frequency items.

The effect is that faulty predictions will not cause us to ignore an item more than once, because Bloom filters are designed so that there are no false negatives.
However, we may underestimate the count of mispredicted items by 1, because an item is not included in the Space Saving when it is inserted in the Bloom filter. We compensate for this by adding $1$ to the query result of the Space~Saving.


\subsubsection{Robustness Result}

We first present theoretical results showing that \lsslfs{} is robust, in the sense that it cannot behave too much worse than the corresponding algorithm that does not use predictions (denoted by $SS$).

\begin{theorem}
\label{theorem:lss_correctness}
    Let $SS$ be an algorithm for $(\epsilon - \frac{1}{N})$-Frequency. Then \lsslfs{} (Algorithm~\ref{alg:lss}) solves $\epsilon$-Frequency.
\end{theorem}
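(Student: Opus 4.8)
The plan is to track how the Bloom filter modifies the stream that actually reaches the $SS$ instance, and then bound the final error by combining the $SS$ guarantee on that modified stream with the $+t$ (here $t=1$) correction applied at query time. First I would fix notation: let $\mathcal{S}$ be the true stream of length $N$ with frequencies $f_i$, and let $\mathcal{S}'$ be the substream that is actually inserted into the internal $SS$ structure (i.e., all arrivals except those that the predictor flags as single-occurrence \emph{and} that the Bloom filter has not yet seen). Write $f'_i$ for the frequency of $i$ in $\mathcal{S}'$ and $N' = |\mathcal{S}'| \le N$. The key structural observation, which I would state as a small claim, is that for every item $i$ we have $f'_i \ge f_i - 1$: the Bloom filter has no false negatives, so once $i$ has been ignored once it can never be ignored again, hence at most one occurrence of $i$ is ever dropped. (It may be that \emph{zero} occurrences are dropped, e.g. if the predictor never flags $i$, or if $i$ is flagged but a false positive is irrelevant here since false positives only cause \emph{fewer} drops.)

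Next I would unwind what {\sc Query}$(i)$ returns. By the algorithm, it returns $\widehat{f'_i} + 1$ where $\widehat{f'_i}$ is the $SS$-estimate of $i$ on the stream $\mathcal{S}'$ (either $c_i$ if $i \in T$, or $\min_{j\in T} c_j$ otherwise). Since $SS$ solves $(\epsilon - \tfrac1N)$-Frequency on $\mathcal{S}'$, we have
\[
f'_i \le \widehat{f'_i} \le f'_i + \Bigl(\epsilon - \tfrac1N\Bigr) N'.
\]
Now I combine this with $f_i - 1 \le f'_i \le f_i$ and $N' \le N$. For the lower bound on the reported value: $\widehat{f'_i} + 1 \ge f'_i + 1 \ge (f_i - 1) + 1 = f_i$. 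For the upper bound: $\widehat{f'_i} + 1 \le f'_i + (\epsilon - \tfrac1N)N' + 1 \le f_i + (\epsilon - \tfrac1N)N + 1 = f_i + \epsilon N - 1 + 1 = f_i + \epsilon N$. Hence $f_i \le \widehat{f'_i}+1 \le f_i + \epsilon N$, which is exactly the $\epsilon$-Frequency guarantee of Definition~\ref{definition:frequency_error_guarantee}.

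The main obstacle, and the step that deserves the most care, is the claim $f'_i \ge f_i - 1$ — in particular making precise \emph{why} at most one occurrence of each item is ever discarded. This rests entirely on the no-false-negatives property of the Bloom filter: the very first time $i$ is flagged by the predictor, it is absent from the filter, so it is discarded and then inserted; every subsequent time $i$ is flagged, $i$ is already in the filter, so the prediction is overridden and $i$ is inserted into $SS$. I would also note that if $i$ is \emph{never} flagged, nothing is discarded and $f'_i = f_i$, so the inequality is not tight in that case; either way $f'_i \in \{f_i - 1, f_i\}$, which is all the argument uses. A secondary, purely bookkeeping point is handling the edge case where $i$ does not appear in $T$ and the reported estimate is $\min_{j\in T}c_j$ — but this is already subsumed by the $SS$ correctness guarantee on $\mathcal{S}'$, so no separate treatment is needed. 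One should also check the degenerate case $f_i = 0$ (item never appears): then the reported value is $\widehat{f'_i} + 1 \ge 1 > 0 = f_i$ on the low side and still at most $\epsilon N$ on the high side provided $\epsilon N \ge 1$, which is the standard operating assumption; I would remark on this briefly rather than belabor it.
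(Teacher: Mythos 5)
Your proposal is correct and takes essentially the same route as the paper's proof: the paper also observes that at most one occurrence of $i$ is removed from the stream the inner $SS$ instance processes, applies the $(\epsilon-\frac{1}{N})$-Frequency guarantee, and adds the $+1$ back to get the bounds $f_i \le \widehat{f_i} \le f_i + \epsilon N$. Your version is simply more explicit about the substream $\mathcal{S}'$, the inequality $f'_i \ge f_i - 1$ via the no-false-negatives property, and the $N' \le N$ bookkeeping, all of which the paper leaves implicit.
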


\begin{proof}

For any item $i$ and stream size $N$, we have \begin{align} \widehat{f_i} \triangleq SS_{(\epsilon - \frac{1}{N})}.\mbox{{\sc Query$(i)$}} + 1.
\end{align}
That is, our estimate is the query result for $i$ from $SS$, which is an algorithm for
$(\epsilon - \frac{1}{N})$-Frequency, with at most one occurence of $i$ removed from the stream $SS$ processes and an extra count of 1 added back in.  It follows the smallest possible return value is 
$(f_i-1)+1 = f_i,$ and the largest possible return value is 
$\left (f_i + \left (\epsilon - \frac{1}{N}\right )N \right ) + 1 = f_i + \epsilon N,$
proving the claim.

Note that we alternatively could have used a $\epsilon$-Frequency algorithm for $SS$ and not added 1 to the query result, yielding a largest possible return value $\widehat{f_i}$ of $f_i + \epsilon N$, but might yield a smallest value of
$f_i-1$ (as a lower bound). We chose the presentation of Theorem~\ref{theorem:lss_correctness} to maintain the same guarantees as without predictions.
\end{proof}


\begin{theorem}
\label{theorem:lss_spacereduction}
Given $k= \epsilon^{-1}$ available counters and the availability of perfect predictions for items with a single occurrence, let $\ell$ represents the count of such items.
By utilizing this predictor specifically to filter out single-occurrence items, Space Saving can estimate the frequency of any item with additive error less than $(N-\ell)\epsilon$, where $N$ denotes the size of the stream. 
\end{theorem}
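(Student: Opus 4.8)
The plan is to reduce the behaviour of \lsslfs{} under perfect predictions to an ordinary run of Space Saving on a shorter stream, and then invoke Lemma~\ref{lemma4} verbatim. First I would pin down exactly which arrivals the internal Space Saving instance sees. Because the single-occurrence predictor is perfect, on each arrival the test on Line~\ref{lsscbf:line_insertSS} routes the item into Space Saving precisely when it occurs at least twice in $\mathcal{S}$: every item $i$ with $f_i=1$ is predicted ``single'', fails the predictor test, is not yet in the filter (it has never appeared before), and is therefore diverted to the filter on Line~\ref{lsscbf:line_insertcbf} rather than inserted; and it never reappears, so the filter is never consulted a second time for it. Hence the filter never overrides a prediction in this scenario. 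Consequently the Space Saving instance processes precisely the substream $\mathcal{S}'$ obtained from $\mathcal{S}$ by deleting the lone arrival of each of the $\ell$ single-occurrence items, and $|\mathcal{S}'| = N-\ell$.

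Next I would observe that the Space Saving update rule is oblivious to the provenance of its input: running it on $\mathcal{S}'$ is literally a run of Space Saving on a stream of length $N-\ell$ with the same $k=\epsilon^{-1}$ counters. Therefore Lemma~\ref{lemma4} applies directly to this run and gives, for every item $i$, an estimate $q_i$ with $f'_i \le q_i \le f'_i + (N-\ell)\epsilon$, where $f'_i$ is the frequency of $i$ in $\mathcal{S}'$ (when $i\notin T$ this uses Lemma~\ref{lemma2} together with the lemma that every item of frequency at least $minCount$ is monitored, exactly as in the proof of Lemma~\ref{lemma4}).

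Finally I would translate this bound back to frequencies in $\mathcal{S}$. For an item $i$ that is \emph{not} a single-occurrence item, none of its arrivals was diverted, so $f'_i = f_i$ and $q_i$ already estimates $f_i$ with additive error below $(N-\ell)\epsilon$. For a single-occurrence item $i$ we have $f_i = 1$, and since the perfect predictor flags $i$ we may report its count as $1$, which is exact. (If one instead returns the Space Saving query plus one, as in Algorithm~\ref{alg:lss}, the estimate is $minCount+1$; by Lemma~\ref{lemma2} applied to $\mathcal{S}'$ this lies between $1=f_i$ and $(N-\ell)\epsilon+1$, so the additive error is still below $(N-\ell)\epsilon$ — the ``$+1$'' only protects the lower bound and can be dropped when the predictor itself certifies $f_i=1$.) Combining the two cases yields additive error less than $(N-\ell)\epsilon$ for every item, with the counter budget still $k=\epsilon^{-1}$, so the accuracy gain costs only the separate space for the filter.

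I do not expect a genuine obstacle here; the single point that must be argued carefully is already isolated in the first paragraph: perfect predictions guarantee that \emph{no} occurrence of a multi-occurrence item is ever sent to the filter, so $\mathcal{S}'$ loses exactly $\ell$ arrivals and nothing more, and the filter cannot ``revive'' a single-occurrence item because such an item never appears a second time. Everything after that is a direct application of the Space Saving guarantees to a stream of length $N-\ell$.
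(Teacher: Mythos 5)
Your proposal is correct and follows essentially the same route as the paper: both reduce the situation to an ordinary Space Saving run on the filtered stream of length $N-\ell$ and then invoke Lemma~\ref{lemma2} and Lemma~\ref{lemma4}. Your write-up is simply more explicit about why the Bloom filter never intervenes under perfect predictions and about the separate handling of single-occurrence versus multi-occurrence items, details the paper's shorter proof leaves implicit.
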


\begin{proof}
In the Space Saving algorithm, the minimal counter value is always greater than or equal to the ratio of the number of inserted items to the number of counters. Since we eliminate items with single occurrences, the total number of inserted items becomes $N-\ell$.
By using $\epsilon^{-1}$ counters, we ensure that the minimal counter value is greater than or equal to $(N-\ell)\epsilon$. The rest of the proof follows immediately from Lemma~\ref{lemma2},~\ref{lemma4}.
\end{proof}

Our \lsslfs{} algorithm will, of course, not be as good as Theorem~\ref{theorem:lss_spacereduction} even with perfect predictions, because it does not assume predictions are perfect, and uses a Bloom filter for robustness.  However, the following theorem provides insight into why we expect strong benefits, given suitably good predictors. In stating the theorem, we recall that a Bloom filter has a false positive rate, which corresponds to the probability a non-set item creates a false positive, and further, for any sequence of $M$ Bloom filter queries, the fraction of false positives is at most $(1+\nu)M$ for any constant $\nu$ with high probability (that is, $O(M^{-\alpha})$ for any constant $\alpha$; note the result may require sufficiently large $M$).

\begin{theorem}
\label{theorem:lss_predcitionspacereduction}
    Given $SS$ with $k= \epsilon^{-1}$ to solve the $\epsilon$-Frequency problem and given perfect predictions of whether an item is a single occurrence.
    \lsslfs{} using $k$ counters
    can estimate the frequency of any item with additive error less than $(N-\ell \cdot(1-(1+\nu)fpr)\epsilon$ with high probability, where $fpr$ is the false positive rate of its Bloom filter and $\nu > 0$ can be any suitable constant.
\end{theorem}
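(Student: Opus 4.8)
The plan is to combine the counting argument of Theorem~\ref{theorem:lss_spacereduction} with the high-probability bound on Bloom-filter false positives, tracking exactly how many single-occurrence items actually get filtered out. Recall from Theorem~\ref{theorem:lss_spacereduction} that if we could filter \emph{all} $\ell$ single-occurrence items, the number of items inserted into Space Saving would drop to $N-\ell$, yielding error $(N-\ell)\epsilon$ via Lemmas~\ref{lemma2} and~\ref{lemma4}. With the Bloom filter in place, a single-occurrence item is still inserted into Space Saving whenever its (sole) arrival produces a false positive on the filter query; on a true negative it is diverted into the filter and never enters Space Saving. So the quantity to control is the number of the $\ell$ single-occurrence arrivals that trigger a false positive.

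First I would set up the accounting: let $M$ be the total number of Bloom-filter queries made (one per arrival that the predictor flags as single, which under perfect single-occurrence predictions is exactly the $\ell$ single-occurrence arrivals, so $M=\ell$). By the stated property of the Bloom filter, for any constant $\nu>0$ the number of false positives among these $M$ queries is at most $(1+\nu)\,fpr\cdot M = (1+\nu)\,fpr\cdot\ell$ with high probability (i.e., with failure probability $O(\ell^{-\alpha})$ for any constant $\alpha$, for $\ell$ sufficiently large). Hence, with high probability, the number of single-occurrence items that are \emph{successfully} filtered is at least $\ell\bigl(1-(1+\nu)fpr\bigr)$. Consequently the number of items actually inserted into the Space Saving structure is at most $N-\ell\bigl(1-(1+\nu)fpr\bigr)$.

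Next I would plug this into the Space Saving guarantees exactly as in Theorem~\ref{theorem:lss_spacereduction}: by Lemma~\ref{lemma2}, with $k=\epsilon^{-1}$ counters the minimum counter value after processing $N'$ insertions is less than $N'\epsilon$, so here $minCount < \bigl(N-\ell(1-(1+\nu)fpr)\bigr)\epsilon$. Lemma~\ref{lemma4} then gives that the frequency of any item is estimated with additive error less than $\bigl(N-\ell(1-(1+\nu)fpr)\bigr)\epsilon$ on the stream that Space Saving actually processes. Finally, as in Theorem~\ref{theorem:lss_correctness}, the \lsslfs{} query adds back $1$ to compensate for the single removed occurrence of $i$ (when $i$ was filtered), and since at most one occurrence of any item is diverted to the filter, this $+1$ correction preserves the one-sided form $f_i \le \widehat f_i$ and does not worsen the additive upper bound; the claimed bound $f_i \le \widehat f_i < f_i + \bigl(N-\ell(1-(1+\nu)fpr)\bigr)\epsilon$ follows, all with high probability over the Bloom filter's randomness.

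The main obstacle — really the only subtlety — is the bookkeeping around the $+1$ correction interacting with false positives: one must check that a single-occurrence item which produces a false positive (and is therefore inserted into Space Saving rather than the filter) still has a correct estimate after adding $1$, i.e.\ that we never both keep an item in Space Saving and credit it the extra count inappropriately, and conversely that a genuinely frequent item mispredicted as single is not under-counted by more than $1$. Under perfect single-occurrence predictions this is clean, since only true single-occurrence items reach the filter path and each contributes at most one diverted occurrence; I would state this invariant explicitly and note that the $+1$ keeps the estimate an overestimate of $f_i$ in all cases, so the lower bound $f_i \le \widehat f_i$ is maintained while the upper bound is governed solely by the reduced insertion count above.
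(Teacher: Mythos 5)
Your proposal is correct and follows essentially the same route as the paper's proof: under perfect predictions exactly the $\ell$ single-occurrence arrivals query the Bloom filter, the high-probability bound gives at most $(1+\nu)\,fpr\cdot\ell$ false positives, so at most $N-\ell\bigl(1-(1+\nu)fpr\bigr)$ items enter Space Saving, and Lemmas~\ref{lemma2} and~\ref{lemma4} finish the argument as in Theorem~\ref{theorem:lss_spacereduction}. Your additional bookkeeping of the $+1$ query correction is a sound elaboration of a point the paper leaves implicit.
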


\begin{proof}
Based on Algorithm~\ref{alg:lss}, \lsslfs{} uses a predictor to filter out arrivals with a single occurrence.
A perfect predictor is assumed, but without knowing that it is perfect.
As \lsslfs{} utilizes a Bloom filter to handle the predictor's imperfections, which may yield false positives that include unnecessary items in its Space Saving instance.
The expected number of false positives can be expressed as $\ell \cdot fpr$, and with high probability the number of false positives is at most $(1+\nu)\ell \cdot fpr$. Following the same logic as in Theorems~\ref{theorem:lss_spacereduction}, it can be deduced that the total number of inserted items, with high probability, is $N - \ell \cdot (1 - (1+\nu) fpr)$.
\end{proof}


\subsection{Addressing Items Up to t Occurrences}
For given $t$, the previous approach can be generalized by addressing items with up to $t$ occurrences, where $t$ is a threshold that depends on the specific distribution.

We now present \lsslf{}, a generalization of \lsslfs{}.
We use a predictor that, for a specific item $i$ and threshold $t$, predicts if item $i$ has more than $t$ occurrences.
\lsslf{} replaces the standard Bloom filter with a Counting Bloom Filter (CBF)~\cite{fan2000summary} which expands on the Bloom filter capabilities by tracking how many times each item appears in a multi-set.
The CBF tracks the number of times an item is inserted into it, so we can (approximately) count how many times an item that is predicted to have less than $t$ occurrences within the stream, and then we place items into the Space Saving if their CBF count reaches a predefined threshold $t$.
This generalizes the previous algorithm using the Bloom filter which corresponds to the case $t=1$.  
Now, however, our underestimation might be as much as $t$, from $t$ insertions until reaching the CBF threshold.  We correct this by adding $t$ to the value returned by $SS$ in \lsslf{}.


\begin{theorem}
\label{theorem:allow_cbf_correctness}
    Given a threshold $t$, let $SS$ be an algorithm for $(\epsilon - \frac{t}{N})$-Frequency. Then \lsslf{} solves $\epsilon$-Frequency.
\end{theorem}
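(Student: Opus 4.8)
The plan is to mirror the proof of Theorem~\ref{theorem:lss_correctness} almost verbatim, replacing the single removed occurrence (and the added $1$) with $t$ removed occurrences (and an added $t$). The key structural fact is that in \lsslf{} every item $i$ is ``held back'' from the Space Saving instance during its first up-to-$t$ arrivals, which get absorbed into the Counting Bloom Filter; once the CBF count for $i$ reaches $t$, all subsequent arrivals of $i$ are forwarded to $SS$. Hence the stream that $SS$ actually processes is the original stream $\mathcal S$ with between $0$ and $t$ occurrences of each queried item $i$ deleted, and the reported estimate is $\widehat{f_i} = SS_{(\epsilon - t/N)}.\mbox{{\sc Query}}(i) + t$.

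First I would observe that the number of occurrences of $i$ removed before $i$ enters $SS$ is some $r_i$ with $0 \le r_i \le t$ (it is exactly $t$ in the error-free CBF case, and could be smaller only if CBF overestimation lets $i$ in early, which only helps). Second, I would apply the $(\epsilon - t/N)$-Frequency guarantee of $SS$ on the processed substream, whose frequency for $i$ is $f_i - r_i$: this gives $f_i - r_i \le SS.\mbox{{\sc Query}}(i) \le (f_i - r_i) + (\epsilon - t/N)N = f_i - r_i + \epsilon N - t$. Third, adding $t$ back yields $f_i - r_i + t \le \widehat{f_i} \le f_i + \epsilon N - r_i \le f_i + \epsilon N$ for the upper bound (since $r_i \ge 0$), and $\widehat{f_i} \ge f_i + (t - r_i) \ge f_i$ for the lower bound (since $r_i \le t$). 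Therefore $f_i \le \widehat{f_i} \le f_i + \epsilon N$, which is exactly $\epsilon$-Frequency.

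The one subtlety I would be careful to flag — and which I expect is the only real obstacle — is the direction of the CBF error and the validity of applying the $SS$ guarantee to a stream of size at most $N$: the Counting Bloom Filter never undercounts, so it can only cause an item to be released into $SS$ at or before its true $t$-th occurrence, never later, which is why $r_i \le t$ holds unconditionally; and the substream $SS$ sees has size $N' \le N$, so its additive error is at most $(\epsilon - t/N)N' \le (\epsilon - t/N)N = \epsilon N - t$, keeping the bound valid. I would close by remarking, as in Theorem~\ref{theorem:lss_correctness}, that one could alternatively run an $\epsilon$-Frequency $SS$ without the additive $t$ correction, at the cost of a possible underestimate of up to $t$; the stated formulation is chosen to preserve exactly the no-prediction guarantee.
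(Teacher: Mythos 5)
Your proposal is correct and follows essentially the same route as the paper, whose proof of Theorem~\ref{theorem:allow_cbf_correctness} is a one-line remark that the argument of Theorem~\ref{theorem:lss_correctness} carries over with up to $t$ occurrences lost per item and $t$ added back to the query. Your write-up is in fact more careful than the paper's, since it explicitly tracks the number $r_i \in [0,t]$ of suppressed occurrences and checks both directions of the bound when the CBF releases an item early, but this is a refinement of the same argument rather than a different approach.
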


\begin{proof}
The proof follows the same logic as that of Theorem~\ref{theorem:lss_correctness}.
Here, however, we may lose up to $t$ entries for an item $i$ within the stream.
\end{proof}

\section{\lssasn{}: \lssaeFULL{}}
\label{sec:lss_hh}

\ifdefined\arXiv
\begin{figure}
    \centering
    \includegraphics[width = 8cm]{figs/lss_assigned.pdf}
    \caption{Block-diagram representation of \lssasn{}}
    \label{fig:lss_asn_block_diagram}
\end{figure}
\fi

Here, we explore the case where a predictor for heavy hitters is included.
We propose \lssasn{} as an approach that exploits the heavy hitter predictor for better performance.
\lssasn{} assigns a set number of entries specifically for predicted heavy hitters in the Space Saving algorithm. As a result, these entries are protected from replacement, even when a new item appears. The other remain mutable, as with the original Space Saving algorithm;  in particular, when an item appears that replaces the item with the lowest counter value, it does so only with respect to the mutable entries.  We discuss additions to this approach to ensure it remains robust in the face of potential prediction errors.

Inaccurate predictions can lead to a scenario where small items erroneously take up fixed entries and reduce the available fixed entries originally intended for actual heavy hitters.
As a result, heavy hitters are subject to frequent inclusion and removal from the structure, resulting in unbounded errors.
To ensure the robustness of our approach, we set a limit on the maximum number of fixed entries ($k_{hh}$).
In cases where the number of predicted heavy hitters exceeds the defined threshold, the initial heavy hitters encountered in the data stream fill the fixed entries, while the remaining heavy hitters compete on regular mutable entries.
In contrast, if the number of predicted heavy hitters is less than the fixed entries, those entries will be given to non-heavy hitters. This is due to the fact that we only mark entries as fixed when a heavy hitter shows up.
In Figure~\ref{fig:example}, there is one fixed entry allocated for $A$, which is predicted as a heavy hitter. This prevents low-frequency items from replacing the counter tracking the heavy hitter $A$.



\begin{theorem}
\label{theorem:lss_predcitionspacereduction}
    Given the availability of perfect predictions for heavy hitters, 
    \lssasn{} using $k$ counters, where $k_{hh}$ among them are fixed entries, 
    provides exact frequencies for $k_{hh}$ heavy hitters (zero errors) and estimates the frequency of the other items with additive error less than $\frac{N-k_{hh}\theta N}{k-k_{hh}}$ where $\theta$ is the heavy hitters' threshold.
\end{theorem}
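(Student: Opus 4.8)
The plan is to analyze the two disjoint parts of the data structure separately: the $k_{hh}$ fixed entries (reserved for predicted heavy hitters) and the $k-k_{hh}$ mutable entries (on which ordinary Space Saving runs). I would argue that, under perfect predictions, each of the $k_{hh}$ heavy hitters permanently occupies a fixed entry from its first arrival onward, so heavy hitters never interact with the mutable entries, and conversely no non–heavy hitter ever occupies a fixed entry; hence the mutable part behaves like a stand-alone Space Saving instance with $k-k_{hh}$ counters fed only the substream of non–heavy-hitter arrivals. Applying Lemmas~\ref{lemma2} and~\ref{lemma4} to that smaller instance, together with a counting bound on the length of that substream, then yields the stated error.

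For the fixed entries: with perfect predictions, the first arrival of a heavy hitter $i$ has $\mathit{HH}(i)=\text{true}$ and $\mathit{fixedEntries}<k_{hh}$, so $i$ is allocated a fixed entry. By construction a fixed entry is never chosen as the $\arg\min$ over $T_{unasn}$ on Line~\ref{line:lssasn_minarg}, so its counter is never reset or reassigned; it is simply incremented on every subsequent occurrence of $i$. Provided the fixed counter is initialized to $i$'s true running count at allocation time — which holds when $i$ first appears while $|T|<k$ (the typical case, since heavy hitters are frequent), or under the trivial modification of initializing a newly fixed entry to the correct value rather than to $c_j+1$ — the final counter equals $f_i$ exactly, i.e.\ zero error. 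I would state this initialization assumption explicitly.

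For the mutable entries and all other items: since every non–heavy hitter only ever touches the mutable entries, these entries evolve exactly as Space Saving with $k-k_{hh}$ counters run on the substream $\mathcal S'$ of arrivals of non–heavy-hitter items (the only deviations are the at most $k_{hh}$ one-time evictions of a mutable entry when a heavy hitter is first seen with $T$ full, and the transient phase before all $k_{hh}$ fixed slots are filled, during which more than $k-k_{hh}$ mutable slots exist — both effects only remove counter mass from the mutable side and thus only decrease its minimum counter). Because each of the $k_{hh}$ heavy hitters has frequency at least $\theta N$, we have $|\mathcal S'| \le N - k_{hh}\theta N$. Lemma~\ref{lemma2} applied to this $(k-k_{hh})$-counter instance gives $minCount < \frac{N-k_{hh}\theta N}{k-k_{hh}}$, and Lemma~\ref{lemma4} then gives additive error less than $\frac{N-k_{hh}\theta N}{k-k_{hh}}$ for the reported estimate of any non–heavy-hitter item, as claimed.

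The main obstacle I expect is making the "the mutable entries are just a smaller Space Saving instance" claim fully rigorous, i.e.\ verifying that the transient start-up phase and the heavy-hitter first-arrival evictions do not break the invariants that Lemmas~\ref{lemma2} and~\ref{lemma4} rely on. The clean way to handle this is to observe that the Space Saving bound on $minCount$ follows from $\sum_{j\in T_{unasn}} c_j \le |\mathcal S'|$ together with $|T_{unasn}| = k-k_{hh}$ in the steady state, and that both transient effects can only reduce $\sum_{j\in T_{unasn}} c_j$ relative to plain Space Saving with $k-k_{hh}$ counters on $\mathcal S'$ — so the inequality, and hence the theorem, is preserved a fortiori.
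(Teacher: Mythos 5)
Your proposal takes essentially the same route as the paper: dedicated fixed counters give zero error for the $k_{hh}$ heavy hitters, and the mutable side is treated as a stand-alone Space Saving instance with $k-k_{hh}$ counters over a substream of length at most $N-k_{hh}\theta N$, after which Lemmas~\ref{lemma2} and~\ref{lemma4} give the bound. Your treatment is in fact more careful than the paper's — in particular your observation that a heavy hitter first arriving when $T$ is full would be initialized to $c_j+1$ rather than its true count (so "zero error" needs an initialization assumption) is a genuine gap in the paper's own one-paragraph argument that you are right to flag explicitly.
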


\begin{proof}
Since heavy hitters have dedicated counters, their error is zero. For the remaining items, we have $k-k_{hh}$ (mutable) counters. As before, the minimal counter value among the mutable counters is always greater than or equal to the ratio of the number of inserted items to the number of counters. However, since we eliminate heavy hitter items, each appearing at least $\theta N$ times, the total number of inserted items becomes $N-k_{hh}\theta N$. Thus, the rest of the proof follows immediately from Lemma~\ref{lemma2} and Lemma~\ref{lemma4}.
\end{proof}

If the frequency distribution of items is Zipfian, then the number of $\theta$-heavy hitters is at most $\frac{1}{\theta \ln{n}}$, where $n$ represents the total number of unique items in the stream (Remark 9.13 in~\cite{hsu2019learning}). In this case, $k_{hh}$ can be set to $\frac{1}{\theta \ln{n}}$.




\section{\lssplus{}: Faster \lss{}}
\label{sec:lss_plus}


\begin{algorithm}[t]
\caption{\lssplus{}}  \label{alg:lssplus}
\begin{algorithmic}[1]
\State \textbf{Initialization}: $T \leftarrow \varnothing$,
\State $BF$ -- $Bloom filter$

    \Function{Insert}{$i$}
    \If{item $i$ is not predicted as low-frequency}
        \BoxedState{Learned Space Saving Insertion}
    \Else
        \If{$Uniform(0, 1) \leq \tau$}\Comment{With probability $\tau$}  \label{line:opt1}
        \If{$BF.CONTAINS(i)$}
        \If{$i \in T$}
            \State $c_i \leftarrow c_i + \tau^{-1}$
        \ElsIf{$|T| < k$} \label{line:ssplus_addentry}
            \State $T \leftarrow T \cup \{i\}$
            \State $c_i \leftarrow \tau^{-1}$
        \Else
            \State $j \leftarrow \text{argmin}_{j \in T} c_j$ 
            \State $c_i \leftarrow c_j + \tau^{-1}$ 
            \State {$T \leftarrow T \cup \{i\} \setminus \{j\}$}
        \EndIf
        \Else
        \State $BF.ADD(i)$
        \EndIf
        \EndIf
    \EndIf
    \EndFunction

    \Function{Query}{$i$}
        \BoxedState{$result =$ Learned Space Saving Query}
        \State \Return $result + \tau^{-1}$
    \EndFunction
\end{algorithmic}
\end{algorithm}

The inclusion of the filter (BF or CBF), while crucial for robustness, can potentially slow down overall performance and increase memory usage. 
We present \lssplus{} that offers potentials for speedup at the cost of loosening our deterministic approximation guarantees to probabilistic guarantees. For simplicity, we describe \lssplus{} for the case of filtering out single occurrences, but the approach also applies to filtering out items with up to $t$ occurrences.

Whenever the predictor suggests an item will have a single occurrence, \lss{} checks for the item's presence in the Bloom filter, and either adds it to the Bloom filter if it is not there, or adds 1 to the item count in the Space Saving data structure if it is. \lssplus{} mitigates this overhead by selectively executing this step with probability $\tau$, and adding $\tau^{-1}$ to the item count in the Space Saving data structure if needed.  (For convenience, we assume here that $\tau^{-1}$ is an integer, to avoid floating point arithmetic, but one could develop alternative implementations.)  

In the special case where $\tau$ equals $1$, \lssplus{} coincides with \lss{}. When $\tau$ is less than 1, it takes on average $\tau^{-1}$ predictions that an item will have a single occurrence before checking the Bloom filter, reducing accesses to and memory required for the Bloom filter.  However, now the guarantees for the algorithm are only in expectation; when an item is in the Bloom filter, its {\em expected} count increases by one each time such a prediction occurs.  The value of the parameter $\tau$ in \lssplus{} can be chosen to balance between robustness and efficiency.
This optimization is handled in Line~\ref{line:opt1} of Algorithm~\ref{alg:lssplus}.

As \lssplus{} checks the Bloom filter probabilistically, it makes sense to consider the expected value of the query response;  we refer to this as solving the query problem in expectation.

\begin{theorem}
\label{theorem:optallow_correctness}
    Let $SS$ be an algorithm for $(\epsilon - \frac{\tau^{-1}}{N})$-Frequency. Then \lssplus{} solves $\epsilon$-Frequency in expectation.
\end{theorem}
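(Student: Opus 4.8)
The plan is to mirror the proof of Theorem~\ref{theorem:lss_correctness} (and its $t$-occurrence generalization, Theorem~\ref{theorem:allow_cbf_correctness}), but since \lssplus{} flips a coin in Line~\ref{line:opt1} before ever touching the Bloom filter, all of the bookkeeping must be done in expectation, and the object that $SS$ actually processes becomes a \emph{random} stream whose length may exceed $N$.

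First, I would fix an item $i$ and a stream $\mathcal{S}$ of length $N$, and write the estimate as $\widehat{f_i} \triangleq SS.\mbox{{\sc Query}}(i) + \tau^{-1}$, exactly as in the deterministic case but with the integral quantity $\tau^{-1}$ in place of $1$. Conditioning on all the coin outcomes, the sequence of updates fed to $SS$ is a fixed unweighted stream $\mathcal{S}'$ (each ``$c_i \gets c_i + \tau^{-1}$'' being read as $\tau^{-1}$ insertions of $i$); let $N'=|\mathcal{S}'|$ and let $W_i$ be the number of copies of $i$ in $\mathcal{S}'$. Since $SS$ solves $(\epsilon-\tfrac{\tau^{-1}}{N})$-Frequency on whatever stream it is handed, the realization satisfies the sandwich
$$W_i \;\le\; SS.\mbox{{\sc Query}}(i) \;\le\; W_i + \Big(\epsilon-\tfrac{\tau^{-1}}{N}\Big) N'.$$

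Second, I would establish the two expectation bounds that drive everything: $f_i - \tau^{-1} \le E[W_i] \le f_i$ and $E[N'] \le N$. Let $m_i$ be the number of occurrences of $i$ at which the predictor says ``low frequency'' (the remaining $f_i-m_i$ occurrences are forwarded to $SS$ as single copies), and let $J_i \sim \mathrm{Bin}(m_i,\tau)$ count the sampled low-frequency occurrences. The first sampled one enters the Bloom filter and contributes $0$ copies, each later sampled one contributes $\tau^{-1}$ copies, and unsampled ones contribute $0$; hence $W_i = (f_i-m_i) + \tau^{-1}(J_i-1)^+$ (Bloom-filter false positives or inconsistent predictions only raise $W_i$, which turns out harmless below). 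Using $E[(J_i-1)^+] = E[J_i] - \Pr[J_i\ge 1] = \tau m_i - (1-(1-\tau)^{m_i})$ gives $E[W_i] = f_i - \tau^{-1}\big(1-(1-\tau)^{m_i}\big) \in [f_i-\tau^{-1},\,f_i]$, and summing over items yields $E[N'] = \sum_i E[W_i] \le \sum_i f_i = N$.

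Third, I would combine: taking expectations of the sandwich and adding $\tau^{-1}$ gives $E[\widehat{f_i}] \ge E[W_i] + \tau^{-1} \ge f_i$, and $E[\widehat{f_i}] \le E[W_i] + (\epsilon-\tfrac{\tau^{-1}}{N})E[N'] + \tau^{-1} \le f_i + (\epsilon-\tfrac{\tau^{-1}}{N})N + \tau^{-1} = f_i + \epsilon N$, which is precisely $\epsilon$-Frequency in expectation, the $-\tau^{-1}/N$ slack baked into $SS$ exactly absorbing the $+\tau^{-1}$ correction. The main obstacle I anticipate is the interaction of the randomized sampling with $SS$'s error term: in a single realization the stream handed to $SS$ can be strictly longer than $N$ (a run of ``heads'' turns one real occurrence into $\tau^{-1}$ synthetic ones), so no per-realization argument suffices — one must rely on $SS$'s additive error being affine in the processed-stream length together with $E[N']\le N$, and then merely check that false positives and inconsistent predictions move $W_i$ and $N'$ in the harmless direction.
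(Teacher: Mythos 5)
Your proposal is correct and follows essentially the same route as the paper: the same estimator $\widehat{f_i} = SS.\mbox{{\sc Query}}(i) + \tau^{-1}$, and the same accounting that each occurrence contributes $1$ to the expected count except for an expected $\tau^{-1}$ occurrences lost before the Bloom-filter insertion, which the added $\tau^{-1}$ compensates. The paper's version is much terser — it never introduces the random processed stream explicitly — so your step bounding $E[N'] \le N$ and invoking the affine form of the $SS$ error is a welcome formalization of a point the paper leaves implicit rather than a different argument.
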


\begin{proof}
The proof follows similar logic as that of Theorem~\ref{theorem:lss_correctness}.
The return value of \lssplus{} is \begin{align} \widehat{f_i} \triangleq SS_{(\epsilon - \frac{\tau^{-1}}{N})}.\mbox{{\sc Query$(i)$}} + \tau^{-1}.
\end{align}

Each time item $i$ is encountered, if the prediction says it will appear again in the stream, its count is increased by 1 deterministically.  If the prediction says it has a single occurrence, but it is in the Bloom filter, its expected count increases by 1, as it goes up by $\tau^{-1}$ with probability $\tau$ (when the Bloom filter is checked).  The only times the expected count does not increase by 1 on each appearance is when the item is not in the Bloom filter and it is predicted to be a single occurrence; this occurs an expected $\tau^{-1}$ times before the item is put in the Bloom filter.  It follows that  
$E[\widehat{f_i}] \geq f_i$ and
$E[\widehat{f_i}] \leq f_i + \left (\epsilon - \frac{\tau^{-1}}{N} \right )N + \tau^{-1} = f_i + \epsilon N.$

\end{proof}

\section{Evaluation}
\label{sec:eval}

Our evaluation examines the effectiveness of \lss{} in comparison to SS. To this end, we aim to address the following questions:
\begin{itemize}
    \item 
How robust is LSS to prediction errors? We consider two extreme cases: all items are predicted as low frequency, and all items are predicted as heavy hitters. We also examine the effect of prediction error probability on \lss{} accuracy.
    \item 
How does \lss{} accuracy compare against SS on various applications, such as frequency estimation, identifying top-k items, and finding heavy hitters?
    \item 
How should we adjust parameters (e.g. the number of fixed entries, filter size) depending on the specific task: frequency estimation, identifying top-k items, and identifying heavy hitters)?
    \item
How much does \lssplus{} improve the update throughput compared to \lss{}?
\end{itemize}

\begin{figure}[t]
	\center{
		\begin{tabular}{ccc}
			\subfloat[Web search]{\includegraphics[width=0.3\columnwidth]{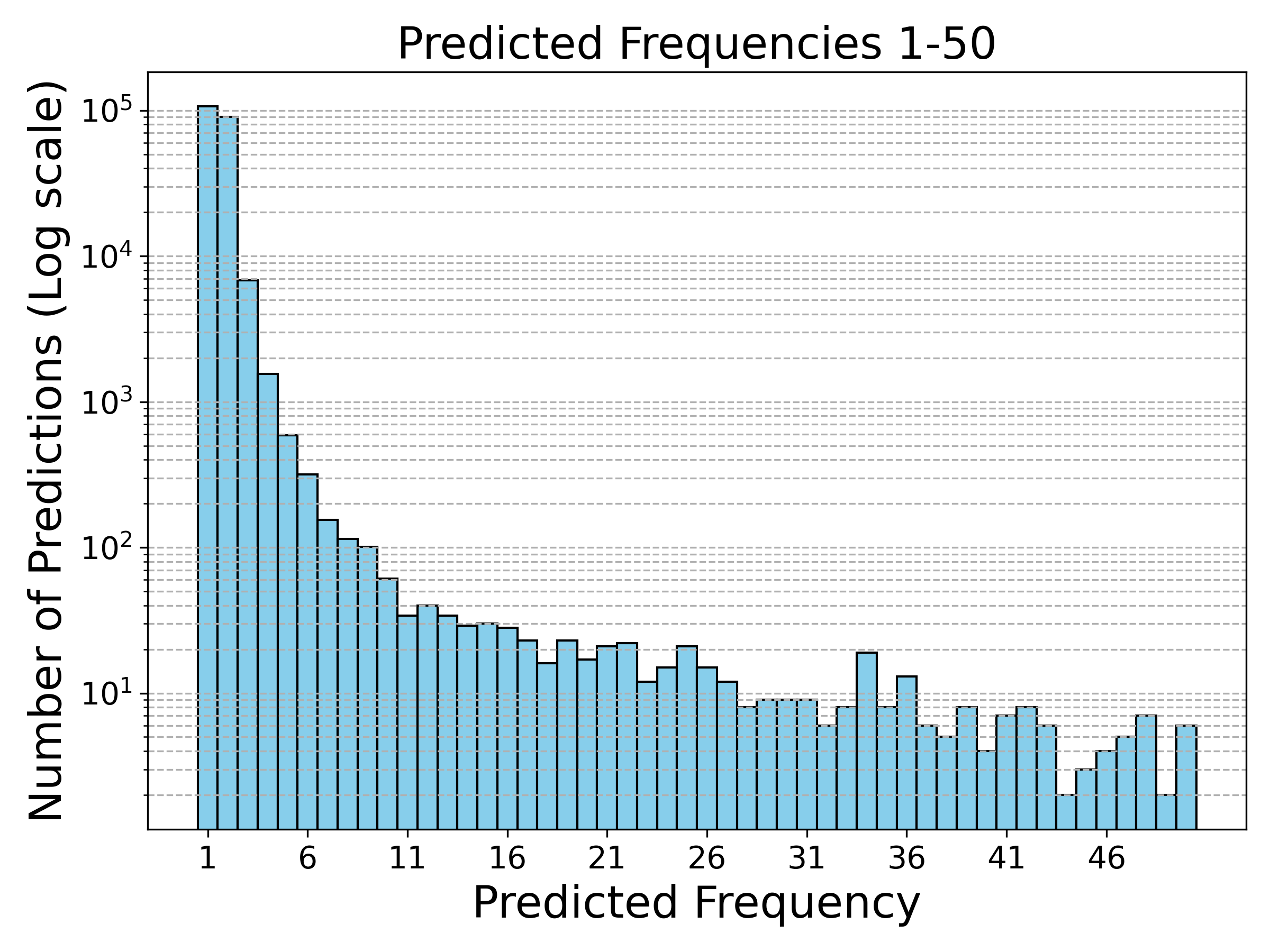}\label{fig:dist_aol}} &
			\subfloat[IP]{\includegraphics[width=0.3\columnwidth]{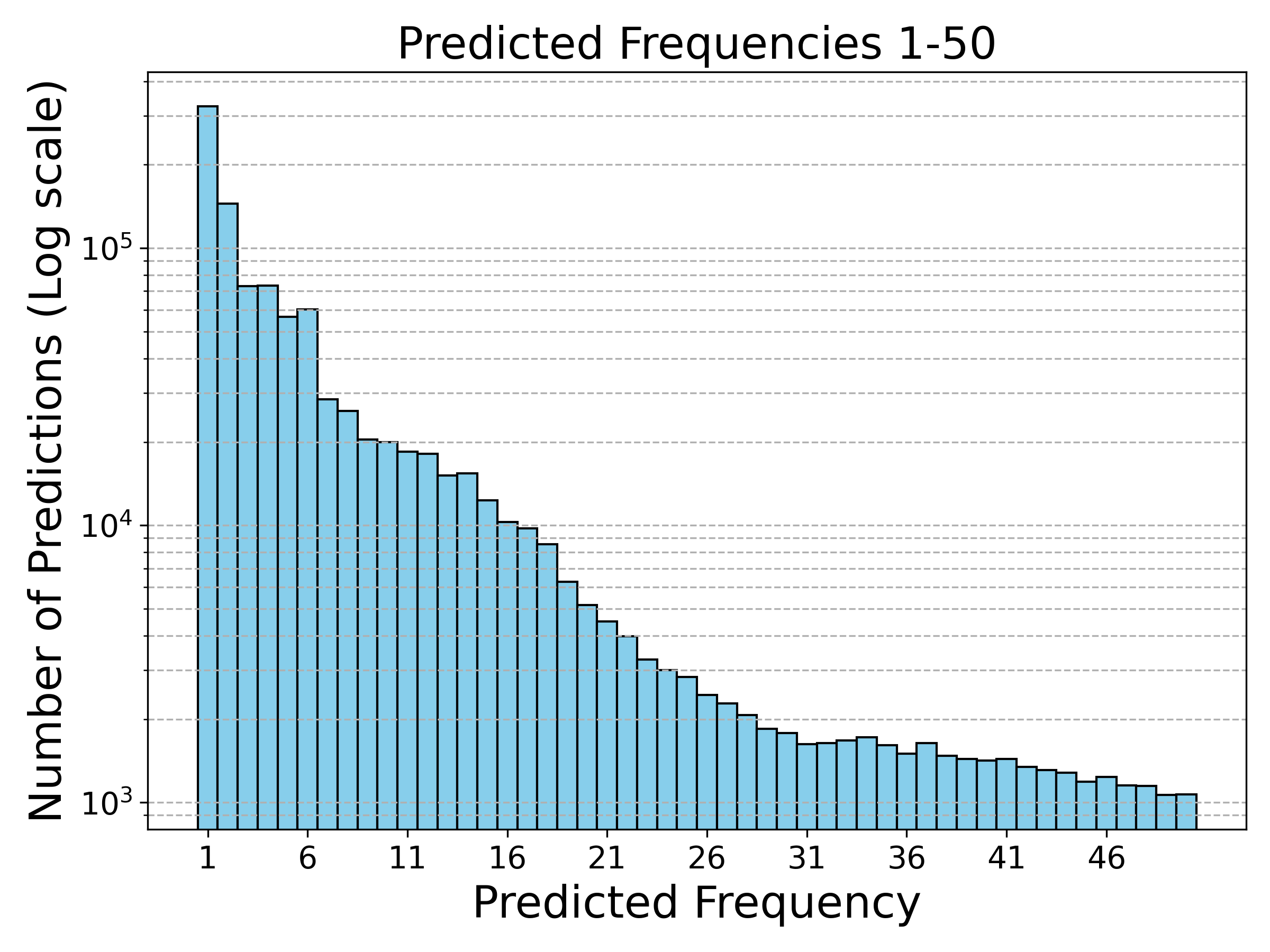}\label{fig:dist_ip}}
			\subfloat[Zipf $\alpha=1.3$]{\includegraphics[width=0.3\columnwidth]{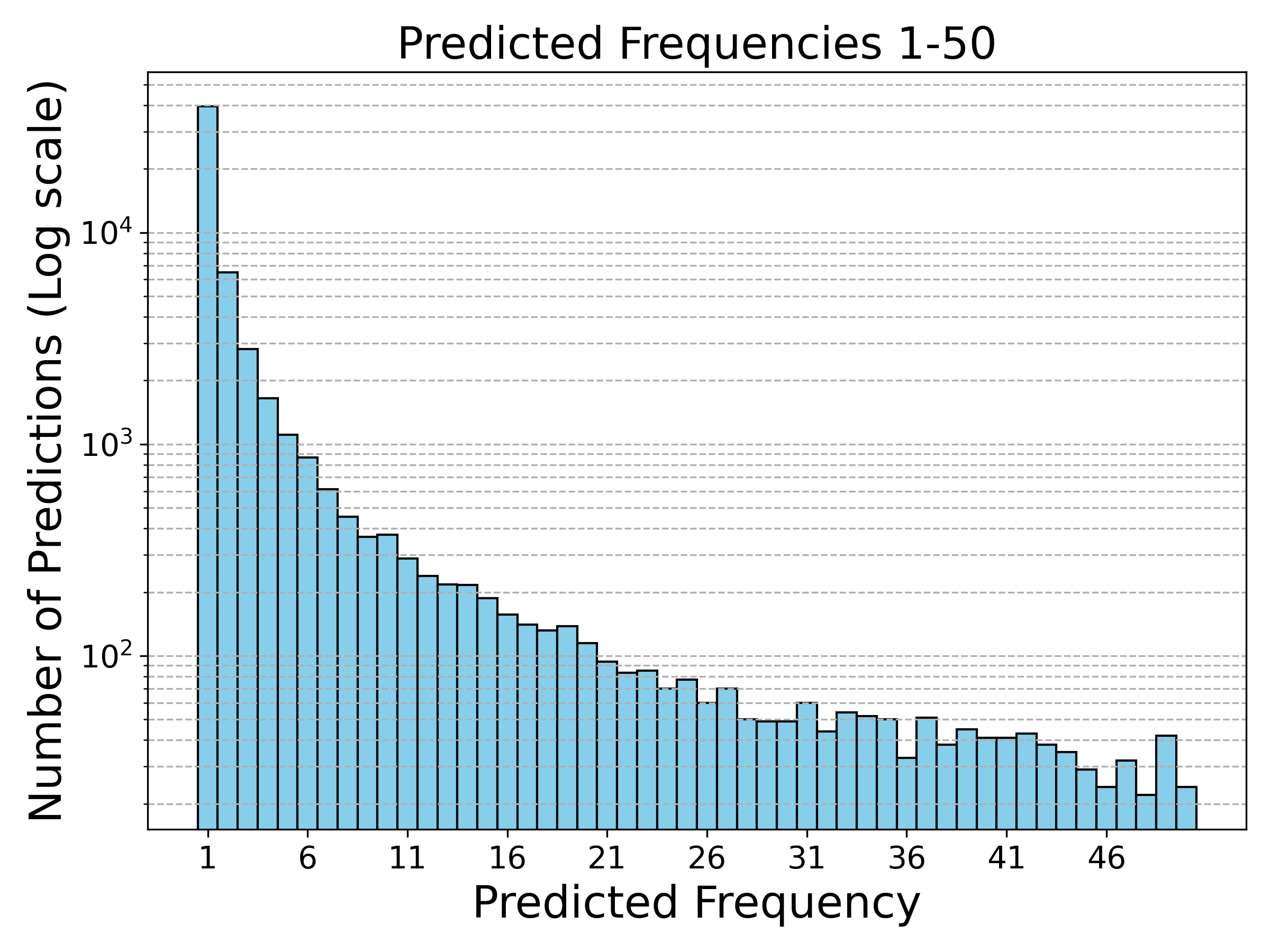}\label{fig:dist_synthetic}}
		\end{tabular}
		}
	\caption{Histogram of predicted frequencies up to $50$ of the used datasets (Log scale). For Web search and IP datasets, we use the learned model described in Section~\ref{sec:learned_predictor}, for the Zipf distribution, we used the simulated predictor with $p=0.9$.}
	\label{fig:dataset_prediction_hits}
\end{figure}


\textbf{Datasets:}
\begin{itemize}
    \item \textbf{IP Trace Datasets:} :
    We use the anonymized IP trace streams collected from CAIDA~\cite{CAIDACH16}. The traffic data is collected at a backbone link of a Tier 1 ISP between Chicago and Seattle in 2016. Each recording session lasts approximately one hour, with around 30 million packets and 1 million unique flows observed within each minute.
    
    \item \textbf{Web Search Query Datasets:} We use the AOL query log dataset~\cite{AOL}, which comprises 21 million search queries collected from 650 thousand anonymized users over a 90-day period. The dataset contains 3.8 million unique queries, each consisting of a multi-word search phrase.

    \item \textbf{Synthetic Datasets:} We generated synthetic datasets following the Zipf~\cite{powers1998applications} distribution with varying skewness levels, each dataset contains 10 million items.
\end{itemize}

\textbf{Implementation and Computation Platform:}
We implemented the learned versions of Space Saving in Python 3.7.6. The evaluation was performed on an AMD EPYC 7313 16-Core Processor with an NVIDIA A100 80GB PCIe GPU, running Ubuntu 20.04.6 LTS with Linux kernel 5.4.0-172-generic, and TensorFlow 2.4.1.

\textbf{The Learned Model:}
\label{sec:learned_predictor}
We follow the implementation of the learned model in~\cite{hsu2019learning} and adapt it using TensorFlow 2.4.1 for the discussed datasets CAIDA~\cite{CAIDACH16} and AOL~\cite{AOL}. For the CAIDA dataset, we train a neural network to predict the log of the packet counts for each flow. The model takes as input the IP addresses, ports, and protocol type of each packet. We employ Recurrent Neural Networks (RNNs) with 64 hidden units to encode IP addresses and extract the final states as features. Ports are encoded by two-layer fully-connected networks with 16 and 8 hidden units. The encoded IP and port vectors are concatenated with the protocol type, and this combined feature vector is used for frequency estimation via a two-layer fully-connected network with 32 hidden units.
For the AOL dataset, we construct the predictor by training a neural network to predict the number of times a search phrase appears. To process the search phrase, we train an RNN with LSTM cells that take the characters of a search phrase as input. The final states encoded by the RNN are fed to a fully connected layer to predict the query frequency.
The model is trained on a subset of data to identify properties that correlate with item frequencies instead of memorizing specific items. This trained model is then tested on a separate dataset.

\textbf{Simulated Frequency Prediction:}
For synthetic datasets, given true frequencies, we generate predicted frequencies.
We simulate a predictor as follows: given a threshold $t$ and probability $p$, items are classified as either small (true count $< t$) or big (true count $\ge t$). With probability $1-p$, an item is mispredicted where small (big) items are mispredicted as big (small) items, and their predicted count is randomly chosen from the set of big (small) item counts. With probability $p$, the prediction of an item is its true count multiplied by a factor that slightly varies around $1$, within a range defined by the noise level (default $5\%$).
In addition, for items below the threshold $t$, a small probability (default $1\%$) controls the likelihood that the added noise will cause these items to be predicted above $t$. For CAIDA and AOL datasets, we use the learned predictor from Section~\ref{sec:learned_predictor}. When demonstrating robustness on real datasets, we employ controlled prediction with a simulated predictor to introduce non-perfect predictions.


\textbf{Parameter Setting:} 
Our approach does not aim to optimize every parameter thoroughly. In practice, parameters can be fine-tuned based on knowledge of data distribution, or by iterative refinement over time, benefiting from our scheme's inherent robustness. Here we state the default parameters we used in the experiments unless stated otherwise. For a fair comparison with the same memory consumption as SS, we allocated $90\%$ of the memory consumed by SS counters to the \lss{} counters and the remaining $10\%$ to the filter (CBF), we refer to this as filter ratio in the experiments. When using fixed counters, particularly for finding heavy hitters, we designated $10\%$ of the total counters as fixed counters.
(We also explore the effects of varying the number of fixed counters.) Since we do not assume a prior knowledge of the dataset, we set a low frequency threshold $t=4$. 
To set the threshold for identifying heavy hitters, we used the theoretical error guarantee $\epsilon$ based on the used memory (Lemma~\ref{lemma2}). The default value for this threshold is chosen to be $\theta=0.25\epsilon$.
To simulate periodic queries throughout the data stream, we execute a query at intervals of every $1000$ arrivals.
The reported accuracy is the average across all windows. 

\subsection{Problems and Metrics}
We explored three problems: (1) detecting heavy hitters in the stream, i.e., items whose frequency exceeds a given threshold; (2) finding the top $k$ most frequent items in the stream, where $k$ is given; and (3) estimating the frequencies of individual items in the stream. Our error metrics are:

\textbf{Root Mean Square Error (RMSE) for Frequency Estimation}: measures the square root of the average squared differences between the estimated frequency and actual frequency. $\mathrm{RMSE} = \sqrt{\frac{1}{n}\sum_{i=1}^{n}(f_i - \hat{f}_i)^2}$.

\textbf{Precision for Top-k}: 
ratio of the number of correctly reported instances to the number of reported instances ($\frac{TP}{TP+FP}$), where $TP$ is the true positive and $FP$ is the false positive.

\textbf{Recall for Heavy Hitters}: 
ratio of the number of correctly reported instances to the number of correct instances ($\frac{TP}{TP+FN}$) where $FN$ is the false negative.

\textbf{Operations performance:} insertions or queries per second.

\subsection{End-to-End Performance}

\paragraph{Data Skew}
To show how the numerous less frequent tail items could collectively dominate the counters in a space-saving algorithm, we present in Figure~\ref{fig:dataset_prediction_hits} histograms for the frequency range of 1 to 50, plotted on a logarithmic scale. The scale highlights that there are a large number of low-frequency items. We observe a sharp peak at the lowest frequency (1) and as the predicted frequencies increase, the number of items decreases roughly exponentially.

\paragraph{Predictions overhead}

The inference time of the used predictors is 2.8 microseconds per item on a single GPU without optimization, which ensures minimal impact on throughput. To evaluate memory overhead, we calculated the total number of parameters and corresponding memory requirements for each predictor model, which depends on the dataset. 
For the network flow predictor, the model has 167,521 parameters: 76,800 for IP address embeddings, 82,176 for encoding ports, and 8,545 for the frequency estimation network. With 32-bit float representations, this translates to around 0.67 MB of memory. The AOL dataset predictor is even more lightweight, with 7,680 parameters: 4,000 for character embeddings (RNN with LSTM cells), and 3,680 for the fully-connected layer, using approximately 0.03 MB of memory.
Thus, the prediction overhead, which includes inference and memory, is therefore small.
Note that inference overhead is expected to be less significant in the future~\cite{kraska2018case} due to specialized hardware such as Google TPUs, hardware accelerators, and network compression~\cite{han2017ese},~\cite{sze2017efficient},\cite{chen2016eyeriss},~\cite{han2016eie}. Furthermore, Nvidia has predicted that GPUs will get 1000x faster by 2025.
In terms of memory, there is a growing research field, TinyML~\cite{dutta2021tinyml}, focused on creating tiny machine learning models for efficient on-device execution. It involves model compression techniques and high-performance system design for efficient ML.
Here we present one example of predictors, which can be treated as black boxes without focusing on their internal functioning; our approach can therefore be used with any suitable and efficient learning scheme that yields a predictor, given the rapid advancements in machine learning research.

\paragraph{Robustness}
Figure~\ref{fig:robutsness} shows the robustness of \lss{} that our theorems suggest using the web search dataset with a simulated predictor. Figure~\ref{fig:robust_all_1} evaluates the precision of top-k as function of the consumed space when $k=10$, where the prediction for every item arrival is 1. Even in this extreme case, we observe only a small degradation in accuracy compared to SS. This degradation arises because we do not benefit from filtering since the filter is overloaded with items, causing items to be frequently replaced in the SS table. Additionally, the filter consumes some memory, so less memory is available for tracking items compared to SS. At the other extreme, Figure~\ref{fig:robust_all_hh} shows the recall of finding heavy hitters when every item is predicted as a heavy hitter. In this scenario, the fixed entries in the counters could be filled with non-heavy hitter items, leaving fewer counters available for tracking the actual heavy hitters. This results in a decreased recall rate. However, once again, the decrease is small. Figure~\ref{fig:acc_vs_p} illustrates the precision rate of the top-k task as a function of the prediction accuracy ($p$), as explained in the simulated predictor. When $p=0$, it implies that all items are mispredicted, in which case SS yields a higher recall rate than LSS because the predictions provide no benefit. As $p$ increases, LSS outperforms SS, and in the other extreme, when $p=1$ (perfect prediction), LSS achieves around $50\%$ improvement in precision compared to SS.

\paragraph{Accuracy vs. Fixed Counters} 

\begin{figure*}[t]
    \centering
        \begin{tabular}{ccccc}
            \subfloat[All predictions are $1$]{\includegraphics[width=\smatrixCellWidth]{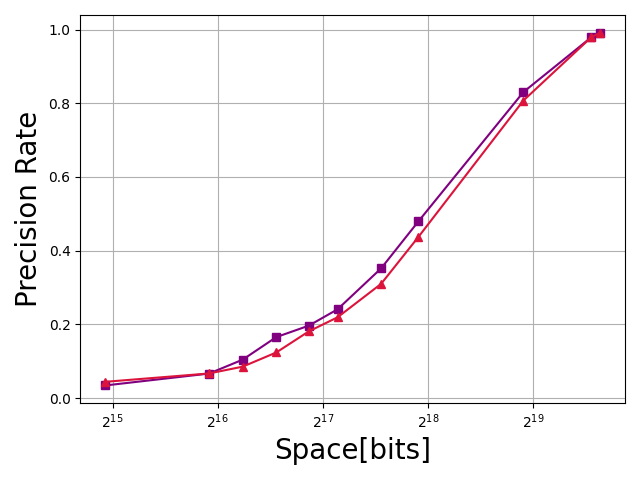}\label{fig:robust_all_1}}&
            \subfloat[All predictions are heavy hitters]{\includegraphics[width=\smatrixCellWidth]{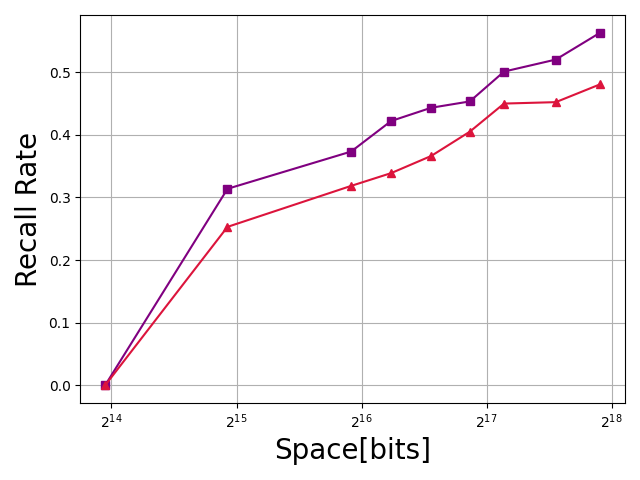}\label{fig:robust_all_hh}} &
            \subfloat[Predictions accuracy]{\includegraphics[width=\smatrixCellWidth]{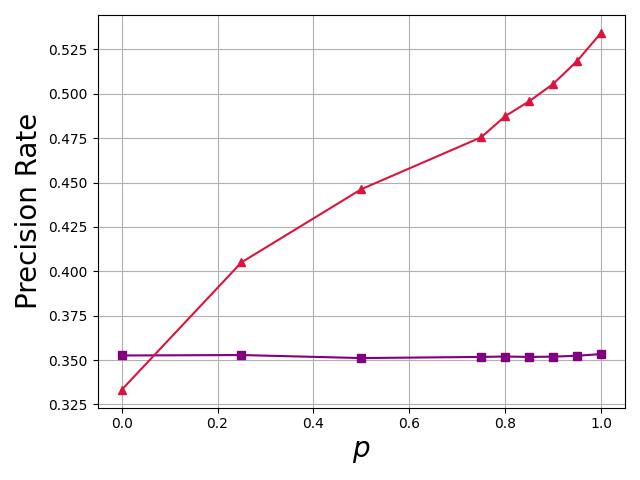}\label{fig:acc_vs_p}} &
            \subfloat[Top-k vs. fixed counters]{\includegraphics[width=\smatrixCellWidth]{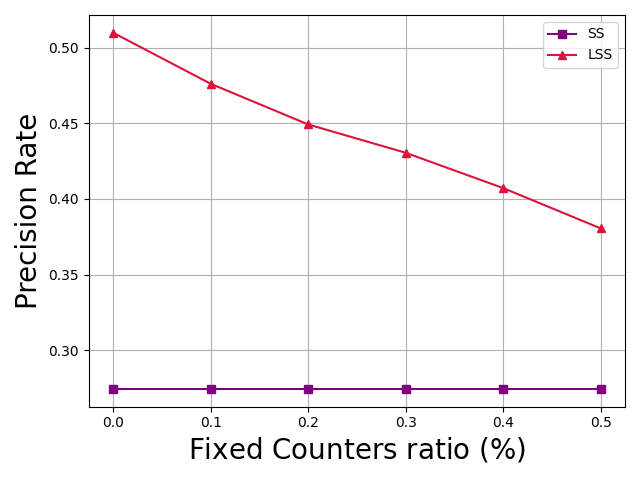}\label{fig:fixed_counters_topk}} &
		\subfloat[Heavy hitters vs. fixed counters]{\includegraphics[width=\smatrixCellWidth]{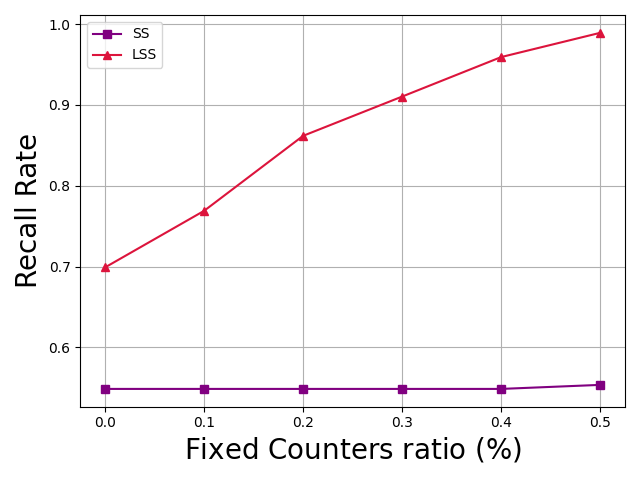}\label{fig:fixed_counters_hh}}
            \\
            \multicolumn{5}{c}{\subfloat{\includegraphics[width=\ssmatrixCellWidth]{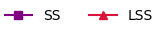}}}
        \end{tabular}
    		
	\caption{(a-c) Robustness of \lss{} using web search dataset (a) precision of top-k ($k=10$) with all predictions as 1 (b) recall of finding heavy hitters when all predictions are heavy hitters (c) precision of top-k vs. prediction accuracy $p$. (d-e) Impact of fixed counters on top-k ($k=64$) and heavy hitters using web search dataset.}
	\label{fig:robutsness}
\end{figure*}

Figures~\ref{fig:fixed_counters_topk},~\ref{fig:fixed_counters_hh} show the recall of finding top-k items and heavy hitters as a function of the number of fixed counters using the web search dataset with the learned model. As this number increases, the recall for identifying top-k items decreases because the fixed counters may be populated with heavy hitters that are not among the top-k items. Since these entries are fixed in a first-come manner, the top items may not be placed in the fixed counters. Allocating fixed counters for heavy hitters that are not top-k items results in fewer mutable counters for tracking top-k. However, as the number of fixed counters increases, the recall for detecting heavy hitters improves since having fixed counters dedicated to tracking heavy hitters aligns with this objective. Thus, for finding top-k items, we set the number of fixed entries to zero, while for finding heavy hitters, we allocated $10\%$ of the counters as fixed counters.


\paragraph{Accuracy vs. Memory}
We examine the accuracy of SS, \lss{}, and \lssplus{} (with $\tau=0.5$) across three tasks: finding top-k items, identifying heavy hitters, and frequency estimation using web search, IP, and synthetic datasets. The accuracy is evaluated as a function of the memory used. Figure~\ref{fig:acc_memory} shows the results for top-k and heavy hitter identification tasks. As expected, higher available memory results in improved precision and recall rates. For the top-k task, no fixed entries were used. \lss{} and \lssplus{} achieve better precision than SS when finding top-k items and better recall when identifying heavy hitters. \lss{} slightly outperforms \lssplus{}, as one might expect;  here for both the filters we allocate
allocate $10\%$ of the memory. Figure~\ref{fig:acc_memory_frequency} illustrates the RMSE of frequency estimation and shows the accuracy of each variant, \lsslf{} and \lssasn{}, separately. As expected, increasing memory consumption decreases RMSE. We observe that each variant improves the accuracy compared to SS, and the combined usage of techniques in 
\lss{} achieves higher accuracy in frequency estimation.

\begin{figure*}[t]
    \centering
    \begin{tabular}{cccccc}
        \subfloat[Web, Top-k]{\includegraphics[width=\ssmatrixCellWidth]{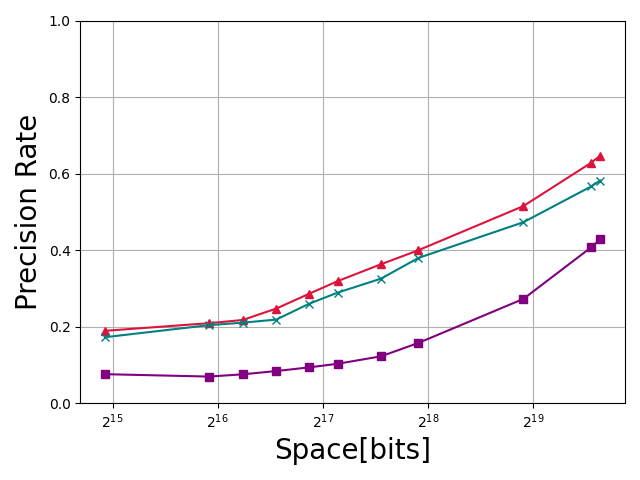}\label{fig:topk_memory_web}} &
        \subfloat[Web, HH]{\includegraphics[width=\ssmatrixCellWidth]{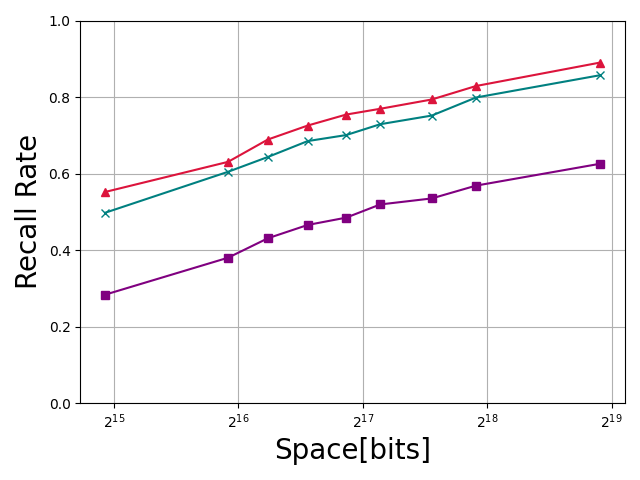}\label{fig:hh_memory_web}} &
        \subfloat[IP, Top-k]{\includegraphics[width=\ssmatrixCellWidth]{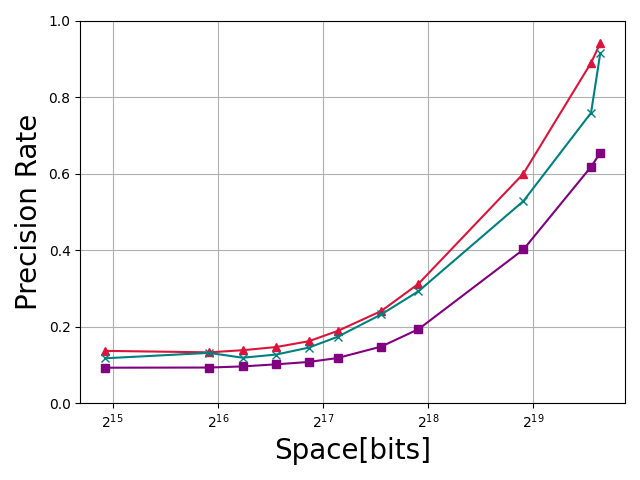}} &
        \subfloat[IP, HH]{\includegraphics[width=\ssmatrixCellWidth]{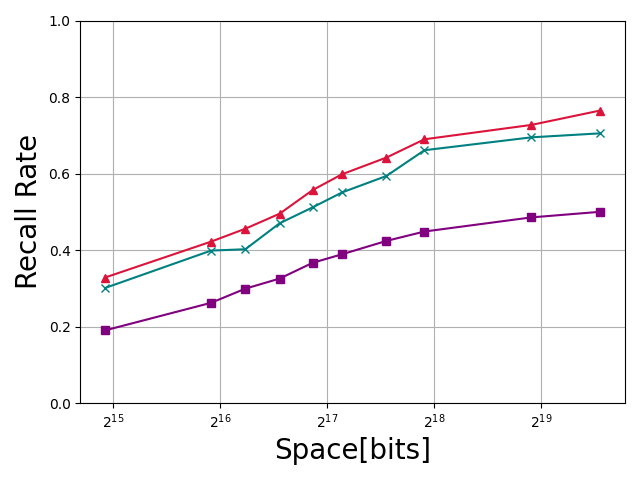}\label{fig:hh_memory_ip}} &
        \subfloat[Zipf, Top-k]{\includegraphics[width=\ssmatrixCellWidth]{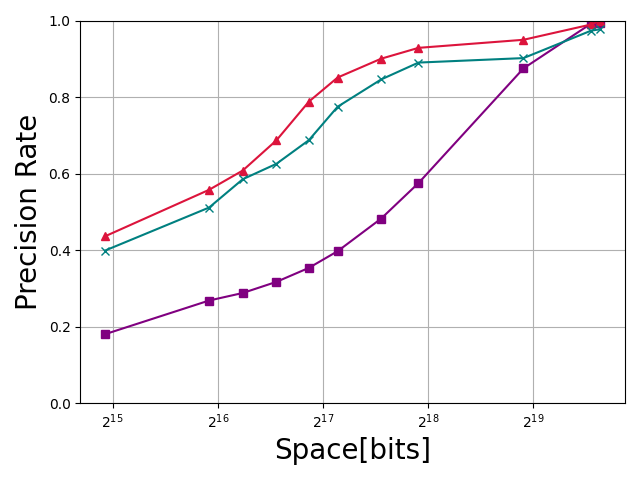}\label{fig:topk_memory_zipf}} &
        \subfloat[Zipf, HH]{\includegraphics[width=\ssmatrixCellWidth]{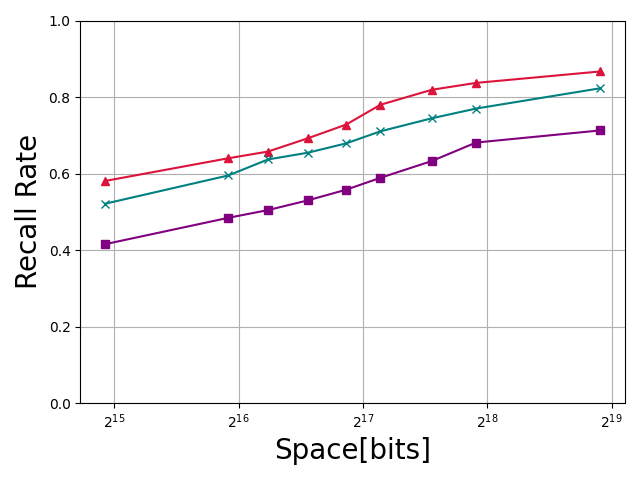}\label{fig:hh_memory_zipf}} \\
        \multicolumn{6}{c}{\subfloat{\includegraphics[width=\matrixCellWidth]{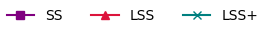}}}
    \end{tabular}
    \caption{Precision and recall vs. memory of identifying top-k ($k=64$) and heavy hitters. \lssplus{} configured with $\tau= 0.5$. We use web search, IP and Zipf ($\alpha=1.3$) datasets.}
    \label{fig:acc_memory}
\end{figure*}

\begin{figure*}[t]
    \centering
    \begin{tabular}{ccccc}
        \subfloat[Web search]{\includegraphics[width=\smatrixCellWidth]{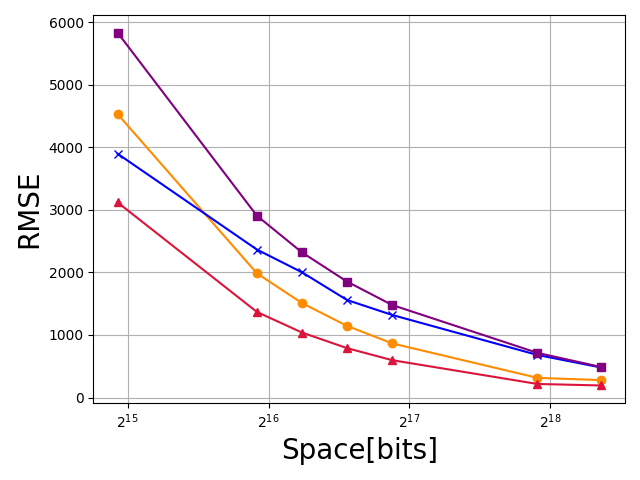}\label{fig:frequency_mem_web}} &
        \subfloat[IP]{\includegraphics[width=\smatrixCellWidth]{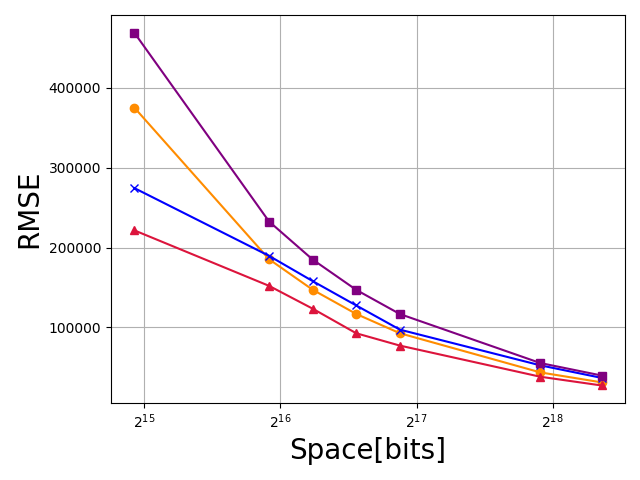}\label{fig:frequency_mem_ip}} &
        \subfloat[Zipf $\alpha=1.3$]{\includegraphics[width=\smatrixCellWidth]{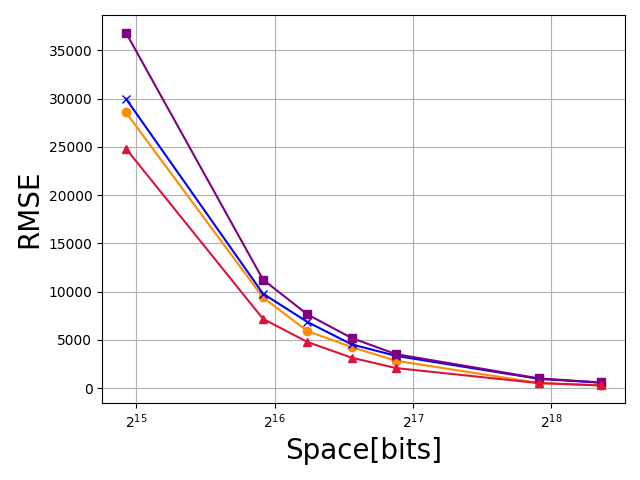}\label{fig:frequency_mem_zipf}} &
        \subfloat[Update ]{\includegraphics[width=\smatrixCellWidth]{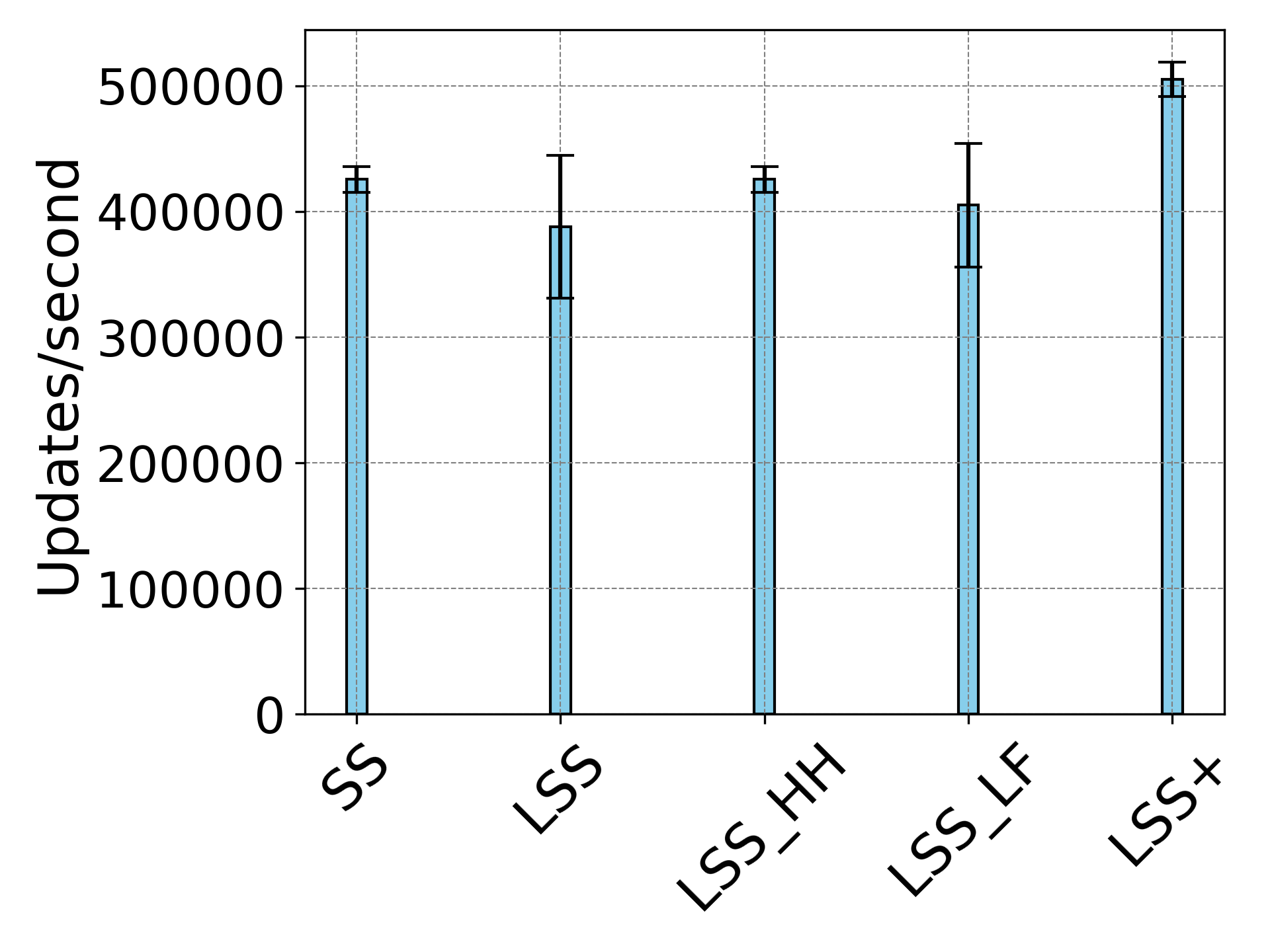}\label{fig:update_perf}} &
	\subfloat[as function of $\tau$]{\includegraphics[width=\smatrixCellWidth]{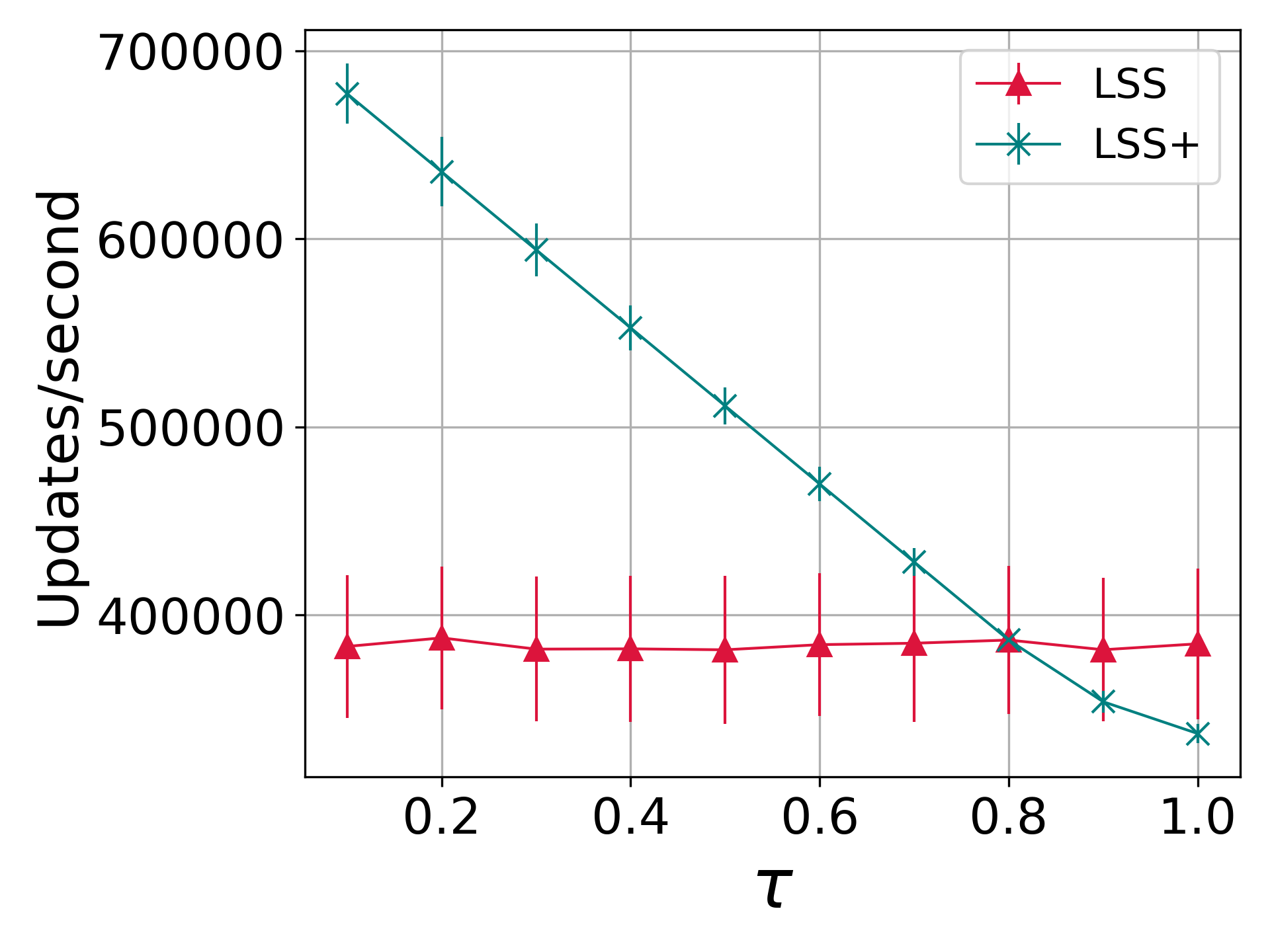}\label{fig:update_tau_perf}}
        \\
        \multicolumn{3}{c}{\subfloat{\includegraphics[width=\legendCellWidth]{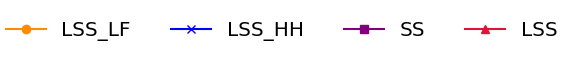}}}
    \end{tabular}
    \caption{RMSE vs. memory of frequency estimation using web search, IP and Zipf ($\alpha=1.3$) datasets.}
    \label{fig:acc_memory_frequency}
\end{figure*}

\paragraph{Accuracy vs. Filter Size}
Figure~\ref{fig:topk_filtersize_ip} presents the impact of the filter ratio on the precision of top-k ($k=64$) items using the IP dataset. In this experiment, we keep the threshold $t$ fixed to default ($t=4$) and vary only the filter ratio. 
Allocating less space to the filter does not affect \lss{}'s correctness but it leads to a higher false positive rate for the filter. As a result, more items are included in the space-saving table, which affects accuracy and makes the approach more similar to falling back on the traditional space-saving algorithm.

\begin{figure*}[t]
	\centering
		\begin{tabular}{ccc}
			\subfloat[IP, Top-k]{\includegraphics[width=\mmatrixCellWidth]{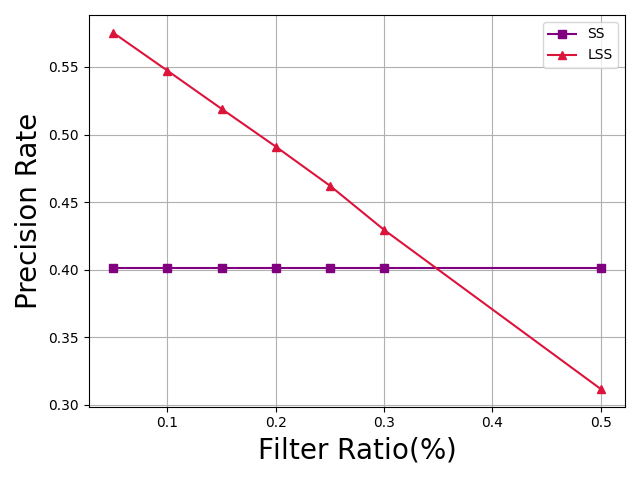}\label{fig:topk_filtersize_ip}} &
                \subfloat[Web search, HH]{\includegraphics[width=\mmatrixCellWidth]{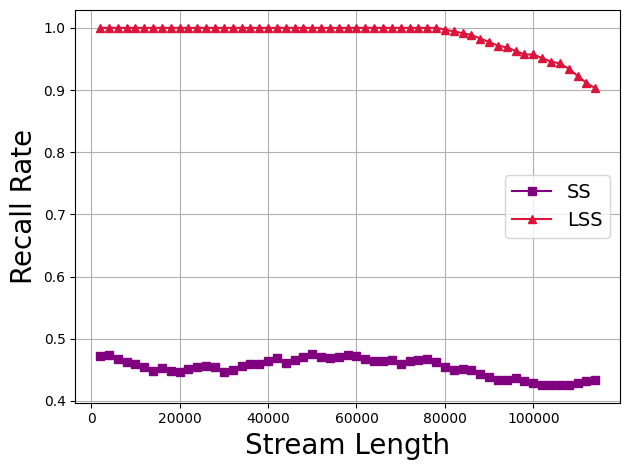}\label{fig:hh_streamlen_web}}
                \subfloat[Synthetic, Frequency]{\includegraphics[width=\mmatrixCellWidth]{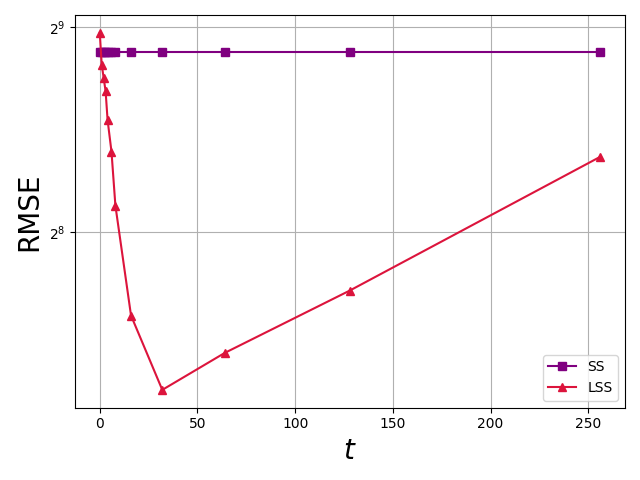}\label{fig:synthetic_vs_t}}
                \subfloat[Synthetic, Top-k]{\includegraphics[width=\mmatrixCellWidth]{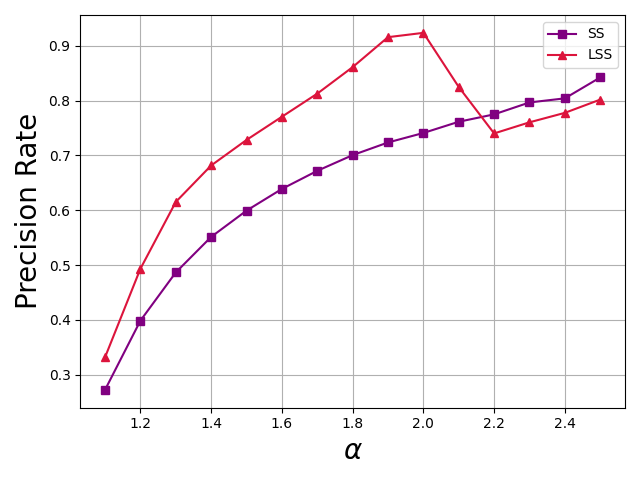}\label{fig:synthetic_vs_alpha}}
		\end{tabular}
		
	\caption{Impact of parameters (a) Precision vs. filter ratio of identifying top-k frequent item using IP dataset (b) recall rate for identifying heavy hitters in a web search dataset, focusing on the initial part of the stream (c) RMSE vs. t using synthetic dataset ($\alpha=1.3$) (d) Precision vs. $\alpha$ of identifying top-k frequent item using synthetic dataset (d) Update operation runtime using IP dataset. We use the BF implementation from~\cite{FastBF}, $t=1$ for \lss{} and \lsscbf{}, $\tau=0.5$ for \lssplus{}; (e) in relation with $\tau$.}
	\label{fig:prameters_impact}
\end{figure*}

\paragraph{Accuracy vs. Stream Length} 
Figure~\ref{fig:hh_streamlen_web} shows the recall of finding heavy hitters using $2^{17}$ bit memory using web search dataset. At the beginning of the stream, LSS maintains an accurate result (high recall rate 1) by effectively filtering out low-frequency items. As the stream grows larger, medium 
items also accumulate, leading to a decrease in the recall rate for both methods. 
\paragraph{Accuracy vs. $t$} 
Figure~\ref{fig:synthetic_vs_t} displays the RMSE vs. t using a synthetic dataset with $\alpha = 1.3$ and $p=0.9$. The RMSE decreases until a certain point and then increases. This behavior is due to the increasing false positive rate of the filter at larger values of t, which is related to the number of low-frequency items and the filter size. In general, with larger memory (here we used $2^{19}$ bit memory), the filter size can be increased proportionally, allowing it to handle more low-frequency items. However, the number of low-frequency items depends on the data distribution. If prior knowledge of the distribution is available, the filter size and $t$ can be adjusted.

\paragraph{Accuracy vs. $\alpha$} 
Using synthetic datasets with $p=0.9$, Figure~\ref{fig:synthetic_vs_alpha} shows the precision of top-k ($k=124$) vs. $\alpha$, the skewness parameter of a Zipfian distribution. As $\alpha$ increases, the distribution becomes more skewed, with a higher concentration of low-frequency items (``heavier tail''). \lss{} has improvements over SS until a certain point ($\alpha=2$). After this point, \lss{}'s precision starts to decrease due to the saturation of the filter with low-frequency items, resulting in higher false positive rates. When $\alpha \ge 2.2$, \lss{} has lower precision than SS since the filter becomes ineffective and fewer counters are allocated to the Space Saving table compared to SS. 

\paragraph{Performance Comparison}
Figure~\ref{fig:update_perf} examines the update performance of SS, \lss{}, \lssasn{}, \lsscbf{} and \lssplus{} algorithms using the IP dataset. We set $t=1$ and use~\cite{FastBF} and have not optimized further.
A key consideration in comparing \lss{} to SS is the computational cost of inserting elements into the Bloom filter versus integrating them into the Space-Saving data structure.
When an item is inserted or queried within a Bloom filter, additional hash computations take place. 
The performance of \lss{} degrades slightly due to the fact that insertion into the Bloom filter is less efficient than updating the Space-Saving data structure.
Meanwhile, \lssplus{}, configured with $\tau=0.5$, noticeably outperforms both the aforementioned versions. This superior performance is due to its ability to minimize the number of insertions to the Bloom filter and to the SS. We skip query speed below since the discussed algorithms have the same query process. Figure~\ref{fig:update_tau_perf} examines the update performance of \lss{} and \lssplus{} as a function of the parameter $\tau$. (These experiments all use 32-bit floating point counters.) As $\tau$ increases, \lssplus{} saves more Bloom filter operations, resulting in improved update performance up to $\tau=0.8$.
Following this, \lssplus{} shows a slight drop in performance compared to \lss{}.

\section{Related Works}
\label{sec:related}

There are many algorithms proposed in the literature for frequency estimation. top-k, the frequent elements problem, and their variations. See~\cite{cormode2008finding} for a survey. 
Algorithms for these problems fall into two main classes~\cite{metwally2005efficient}: (deterministic) competing-counter-based techniques and (randomized) hashing-based techniques.
competing-counter-based techniques (e.g. Space Saving) maintain a separate counter for each item within the monitored set, a subset of the stream. The counters for monitored items are updated when they appear in the stream. In contrast, when there is no counter for the observed item, the item is either ignored or some algorithm-dependent action is taken.
hashing-based techniques (e.g. Count-Min Sketch~\cite{cormode2005improved}) use competing-counter-based bitmaps to estimate all items' frequencies rather than monitoring a subset. Each item is hashed into a space of counters using a family of hash functions, and every arrival within the stream updates the counters.
~\cite{li2023ladderfilter} proposed discarding approximately infrequent items in the entire data stream setup using multiple LRU queues that code the item IDs. This approach is based on the assumption that items that have been infrequent for a long period are unlikely to become frequent later. However, this assumption may not hold true for every dataset.

Algorithms with predictions is, as we have stated, a rapidly growing area.  The site \cite{awpweb} contains a collection of over a hundred papers on the topic.  
The seminal work~\cite{kraska2018case} focused on employing machine learning to refine indexing data structures and devise enhanced algorithms for data management, and there are now numerous works on this theme. For example,~\cite{mitzenmacher2018model} analyzed and enhanced learned Bloom filters as described in~\cite{kraska2018case}. The work presented in~\cite{vaidya2022snarf} introduced a learned range filter for range queries on numerical datasets. The study in~\cite{sabek2022can} explored the potential benefits of replacing traditional hash functions with learned models, aiming to minimize collisions and boost performance.

The idea of using predictions to specifically improve frequency estimation algorithms appears to have originated with \cite{hsu2019learning}, where they augmented a learning oracle of the heavy hitters into frequency estimation hashing-based algorithms. Later~\cite{jiang2019learning} explored the power of such an oracle, showing that it can be applied to a wide array of problems in data streams.


\section{Conclusion}
\label{sec:conclusion}


Identifying heavy hitters and estimating the frequencies of flows are fundamental tasks in various network domains. Recent works have explored the use of machine learning techniques to enhance algorithms for approximate frequency estimation problems.
However, these studies have focused only on the hashing-based approach, which may not be best for identifying heavy hitters. In this work, we have presented a novel learning-based approach for identifying heavy hitters, top $k$, and flow frequency estimation.
We have applied this approach to the well-known Space Saving algorithm, which we have called Learned Space Saving (LSS).
Our approach is designed to be resilient against prediction errors, as machine learning methods are inherently imperfect and may exhibit errors. We have demonstrated the benefits of our design both analytically and empirically. Experimental results on real-world datasets highlight that LSS achieves higher recall and precision rates, as well as improved root mean squared error (RMSE) compared to the traditional Space Saving~algorithm.


\section*{Acknowledgments}
We thank Sandeep Silwal and ChonLam Lao for their assistance with the evaluation setup.
Rana Shahout was supported in part by Schmidt Futures Initiative and Zuckerman Institute. Michael Mitzenmacher was supported in part by NSF grants CCF-2101140, CNS-2107078, and DMS-2023528.

\clearpage

\bibliographystyle{plain}
\bibliography{refs}

\end{document}
\endinput